\documentclass[reqno]{amsart}

\usepackage{enumitem}
\usepackage{amsmath,cancel}
\usepackage{amsthm,amsfonts,amssymb,txfonts,pxfonts,empheq}
\usepackage{hyperref}
\usepackage{color}

\newtheorem{proposition}{Proposition}[section]

\newtheorem{corollary}[proposition]{Corollary}
\newtheorem{observation}[proposition]{Observation}
\newtheorem{theorem}[proposition]{Theorem}
\newtheorem{remark}[proposition]{Remark}

\newtheorem {RHP}{Riemann-Hilbert problem}
\textwidth 17cm
\oddsidemargin 0cm
\evensidemargin 0cm

\newcommand{\lp}{\left(}

\newcommand{\rp}{\right)}
\newcommand{\ls}{\left[}
\def\C{\mathbb C}
\newcommand{\rs}{\right]}
\newcommand{\lb}{\left\{}

\newcommand{\be}{\begin{eqnarray}}
\newcommand{\ee}{\end{eqnarray}}
\newcommand{\tAbb}{\tilde{\mathbb{A}}}
\newcommand{\tBbb}{\tilde{\mathbb{B}}}
\def \le{\left}
\def \1{\mathbf 1}
\def \ri{\right}
\newcommand{\rb}{\right\}}
\newcommand{\la}{\left|}
\newcommand{\ra}{\right|}
\newcommand{\bPsi}{\boldsymbol{\Psi}}
\newcommand{\bPhi}{\boldsymbol{\Phi }}
\newcommand{\tf}{\theta\left(z;x,t\right)}
\newcommand{\sdue}{\sigma_2}
\newcommand{\stre}{\sigma_3}

\newcommand{\hatalpha}{\hat{\alpha}}
\newcommand{\Imag}{\mbox{Im}}
\newcommand{\Real}{\mbox{Re}}

\newcommand*{\mailto}[1]{\href{mailto:#1}{\nolinkurl{#1}}}

\begin{document}

\definecolor{aqua}{rgb}{0.0, 1.0, 1.0}
\baselineskip 15pt plus 1pt minus 1pt

\vspace{0.2cm}
\begin{center}
\begin{Large}
\fontfamily{cmss}
\fontsize{17pt}{27pt}
\selectfont
\textbf{
A degeneration of two-phase solutions of focusing NLS via Riemann-Hilbert problems}
\end{Large}\\
\bigskip
\begin{large} { Marco Bertola $^{\ddagger,\sharp}$, \footnote{Work supported in part by the Natural
    Sciences and Engineering Research Council of Canada
(NSERC).} \footnote{Marco.Bertola@concordia.ca} , Pietro Giavedoni$^\star$\footnote{
Research partially supported by the Austrian Science Fund (FWF) under Grant No.\ Y330 and by FP7 RUSE'S grant RIMMP "Random and Integrable Models in Mathematical Physics"
}
\footnote{addenaro@gmail.com, pietro.giavedoni@univie.ac.at} }
\end{large}
\\
\bigskip
{\small
$^{\ddagger}$ { Department of Mathematics and
Statistics, Concordia University\\ 1455 de Maisonneuve W., Montr\'eal, Qu\'ebec,
Canada H3G 1M8} \\
$^{\sharp}$ { Centre de recherches math\'ematiques, Universit\'e\ de
Montr\'eal } \\
$^\star$ {
Faculty of Mathematics,  University of Vienna\\
Oskar-Morgenstern-Platz 1, 1090 Wien,  Austria\\}
}
\bigskip
{\bf Abstract}
\end{center} 
%
%
%

%

Two-phase solutions of focusing NLS equation are classically constructed out of an appropriate Riemann surface of genus two, and expressed in terms of the corresponding theta-function. We show here that in a certain limiting regime such solutions reduce to some elementary ones called "Solitons on unstable condensate". This degeneration turns out to be conveniently studied by means of basic tools from the theory of Riemann-Hilbert problems. In particular no acquaintance with Riemann surfaces and theta-function is required for such analysis.
\\[1pt]
\hrule
\vskip 12pt

\section{Introduction}

The focusing nonlinear Schr\"odinger equation (abbreviated fNLS)
\begin{align}\label{fNLS}
i q_t + q_{xx} + 2\left| q \right|^2 q = 0, \quad\quad q = q\lp x,t \rp \in \mathbb{C};\,\,  x,t \in\mathbb{R}
\end{align}
plays an important role in modelling phenomena from several fields of physics: Nonlinear Optics, Water Waves and theory of turbulence in plasmas, to quote just some of the main ones (\cite{Chiao},\cite{Zakhar}, \cite{ZakCol}). From the analytic point of view, fNLS exhibits the remarkable feature to be integrable \cite{ZakSha}. As a consequence, many of its solutions can be found via an appropriate Riemann-Hilbert problem. In this paper we wish to consider the following, particular one:\\
Let $E_1$ and $E_3$ be complex numbers with positive imaginary parts such that 
\begin{align}
\Real\lp E_1 \rp > \Real \lp E_3 \rp.
\end{align}
Let \begin{align}\label{movimento}
E_2 = E_3 + \epsilon, \quad\quad \epsilon >0
\end{align}
where $\epsilon$ will be considered to be small. Put
\begin{align}
E_{-j} = \overline{E_j}, \quad \quad j= 1,2,3.
\end{align}
Here the overline in the right-hand side denotes the complex conjugation. Let us consider the oriented contour (see fig. \ref{fig_ContornoGamma}) 
\begin{align}
\Gamma := \lp E_{-3},E_{-2} \rp \cup \lp E_{-2},E_{-1} \rp \cup \lp E_{-1},E_{1} \rp \cup \lp E_{1},E_{2} \rp \cup \lp E_{2},E_{3} \rp.  
\end{align}

\begin{figure}
\begin{center}
\includegraphics[width= 0.45\textwidth]{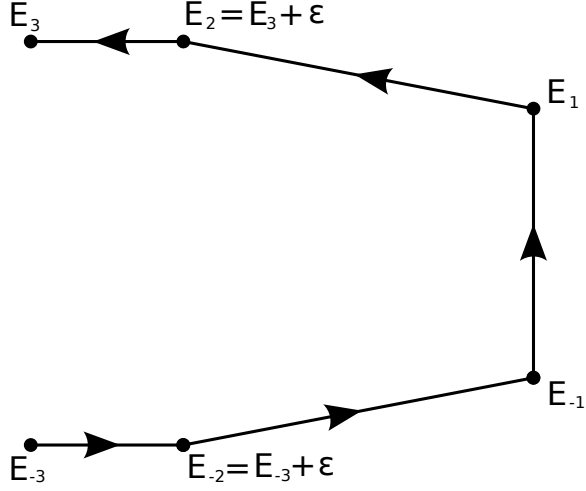}
\caption{\label{fig_ContornoGamma}{\it The closure of the oriented contour $\Gamma$}}
\end{center}
\end{figure}
Following the common usage, for any $z\in \Gamma$  we will denote by $f_\pm(z)$ the (non-tangential) boundary values from the left/right (respectively) of a function $f(z)$ analytic in $\mathbb C \setminus  \overline \Gamma$. By the employment of such notation we implicitly assume that such limits exist.
Finally, fix two arbitrary real numbers $\alpha$ and $\beta$. For every fixed $x$ and $t$, and arbitrarily small $\epsilon$, let us formulate the following Riemann-Hilbert problem (abbreviated RHP).

\begin{RHP}\label{Zafira_RH}
Find a matrix-valued function 
\begin{align}
\bPsi : \mathbb{C}\backslash \overline{\Gamma} \rightarrow Mat\lp 2\times 2, \mathbb{C} \rp
\end{align}
such that:
\begin{description}
\item[$\mathbf{i}$] $\bPsi $ is analytic in $\mathbb{C}\backslash \overline{\Gamma}$.
\item[$\mathbf{ii}$]  For every point  $z\in\Gamma$ its non-tangential limits satisfy
\begin{align}
\bPsi_{+}\lp z \rp = \bPsi_{-}\lp z \rp M\lp z \rp
\end{align}
where 
\begin{align}
M\left(z\right)=\left\{\begin{array}{ll}
                                         e^{-i\left[\tf + \frac{\alpha}{2}\right]\stre}\left(-i\sigma_2\right)e^{i\left[\tf + \frac{\alpha}{2}\right]\stre} & z \in \left(E_2,E_3\right)\\
                                         e^{-i\beta\stre}                                                                                                                                         & z \in \left(E_1,E_2\right)\\
                                         e^{-i\tf\stre}\left(-i\sdue\right)e^{i\tf\stre}                                                                                                 & z \in \left(E_{-1},E_{1}\right)\\
                                         e^{-i\beta\stre}                                                                                                                          & z \in \left(E_{-2},E_{-1}\right)\\     
                            e^{-i\left[\tf+\frac{\alpha}{2}\right]\stre}\left(-i\sdue\right)e^{i\left[ \tf + \frac{\alpha}{2}\right]\stre}     & z \in \lp E_{-3},E_{-2}\rp
\end{array}\right.
\end{align}
Here
\begin{align} \label{def_theta}
\theta\lp z;x,t \rp := 2tz^2  + xz
\end{align}
and 
\begin{align}
\sigma_2 = \ls \begin{array}{cc}
0 & - i \\
 i  & 0
 \end{array}\rs, \quad\quad \sigma_3 = \ls\begin{array}{cc}
1 & 0\\
0 & -1
\end{array}\rs.
\end{align}
\item[\textbf{iii}] The following asymptotic behaviour holds:  
\begin{align}
\bPsi\lp z \rp = \mathbf 1 + \mathcal{O}\lp \frac{1}{z} \rp, \quad\quad \mbox{as} \quad z\rightarrow \infty.
\end{align} 
\item[\textbf{iv}]
The growth condition
\begin{align}
\bPsi\lp z \rp = \mathcal{O}\lp \left| z - E_j \right|^{-\frac{1}{4}} \rp, \quad \quad \mbox{as}\quad z\rightarrow E_{j}
\end{align}
is satisfied for $j=\pm1, \pm 2, \pm 3$.
\end{description}
\end{RHP}
Let us remark that condition $\mathbf{iv}$ is necessary in order to guarantee uniqueness of the solution of  Riemann-Hilbert problem \ref{Zafira_RH}. On the other side, no bounded ones can exist for it. The family of solutions to the parametric RHP \ref{Zafira_RH} will be indicated with $\bPsi\lp z ; \epsilon \rp$, being the dependence on $x$ and $t$ understood. A solution of fNLS can then be produced via the following simple formula.
\begin{align} 
q\lp x,t; \epsilon \rp = -2\lim_{z\rightarrow \infty} z\bPsi_{12}\lp z; \epsilon \rp.    \label{Formula_Beni}
\end{align}

By a variation of the parameters defining Riemann-Hilbert problem \ref{Zafira_RH}, formula (\ref{Formula_Beni}) describes correspondingly the family of the so-called "two-phase" solutions of fNLS (see below). 

{
The aim of this paper is to provide a simple and efficient method to investigate the behaviour of the two-phase solutions (\ref{Formula_Beni}) as $\epsilon$ tends to zero.
The RHP \ref{Zafira_RH} falls within  the  general class of ``quasi-permutation monodromy'' problems \cite{KorSol} whose solution is in principle well known in terms of the Szeg\"o\ kernel of a Riemann surface branched above the endpoints: in our situation the Riemann surface is a hyperelliptic surface (see \ref{Christa}).  Although the solution could be written explicitly, our goal is a different one. Specifically we want to study the behaviour of the solution $q(x,t;\epsilon)$ in (\ref{Formula_Beni}) in the limit $\epsilon\to 0$ {\em without} resorting to the explicit solution for finite $\epsilon$.\\
Our motivation in studying such limit comes from the more general project to extract as much information as possible from the still quite abstract family of two-phase solutions (see \cite{GiaPer}, \cite{GiaTop} for former geometrical results in this direction). On one side, the degeneration process that we consider allows one to find simpler solutions. Ours here are expressed in terms of elementary functions and they first appeared in the literature in 2011 due to Zakharov and Gelash. They named them "Solitons on Unstable Condensate" (\cite{ZakGel}). More recently, these ones have been obtained by Biondini and Kova\v{c}i\v{c} \cite{BioKov} after developing the inverse scattering transform for fNLS with non-zero buondary conditions at infinity. Closely related solutions were already proved to be useful in experimental studies  (see for example \cite{ChaVit}, \cite{DysTru}). On the other side, results of this type have already been employed to reconstruct (locally) the family of the two-phase solutions without algebraic geometry (\cite{GiaTop}). The method was originally applied to KdV and KP by Dubrovin in \cite{DubInv}.\\
The results we present here (Theorems \ref{th_Soliton} and \ref{th_onda_piana}) have been obtained with different methods by one of the authors in \cite{GiaTop}. 
Using in this paper only basic Riemann-Hilbert techniques we simplify significantly the calculations. Moreover, the degeneration of the two-phase solutions can be performed even without any knowledge of Riemann surfaces and theta functions. After setting our method for this specific case, we wish to apply it to more complicated ones. We are confident that this will turn out to be a valuable tool in the more general plan to provide explicit qualitative description of algebro-geometric solutions to integrable, non-linear, partial differential equations.
\vspace{1em}

We have already mentioned that Riemann-Hilbert problem \ref{Zafira_RH} describes via (\ref{Formula_Beni}) the family of two-phase solutions of fNLS. Let us characterize these ones by sketching their classical construction (\cite{BelBob}, \cite{BobKle}, \cite{DubThe}, \cite{ItsKot}. See also \cite{CalIve} which includes an interesting application of them).  

Let us keep $E_{\pm j}, \, j=1,2,3$ as defined above and consider the hyperelliptic Riemann surface 
\begin{align}
\mathcal{S}: \quad w^2 = \prod_{j=1}^{3}\lp z - E_j \rp\lp z - E_{-j} \rp.\label{Christa}
\end{align}
We will indicate with $P$ the point $\lp z,w \rp$ on $\mathcal{S}$. Its compactification has exactly two additional points, denoted by $\infty^{\pm}$, corresponding to the limit when $z$ tends to infinity and $w$ behaves like $\pm z^3$ respectively. Moreover, $\mathcal{S}$ can be endowed with the anti-holomorphic involution 
\begin{align}
\sigma: \mathcal{S}\longrightarrow \mathcal{S}, \quad\quad \sigma\lp z,w \rp = \lp \overline{z}, \overline{w} \rp.
\end{align}
Let us fix on $\mathcal{S}$ the basis in the homology
\begin{align}
\mathcal{B} = \lb a_1,a_2 ;b_1,b_2 \rb
\end{align}
indicated in figure \ref{fig_SupRiem}. Notice that this one satisfies the symmetries
\begin{align}\label{Denisa}
\sigma_{\star}\lp a_1 \rp = -a_2, \quad\quad \sigma_{\star}\lp b_1 \rp = b_2
\end{align}
where $\sigma_{\star}$ is the map induced by $\sigma$ in the homology of $\mathcal{S}$. 
\begin{figure}
\begin{center}
\includegraphics[width= 0.40\textwidth]{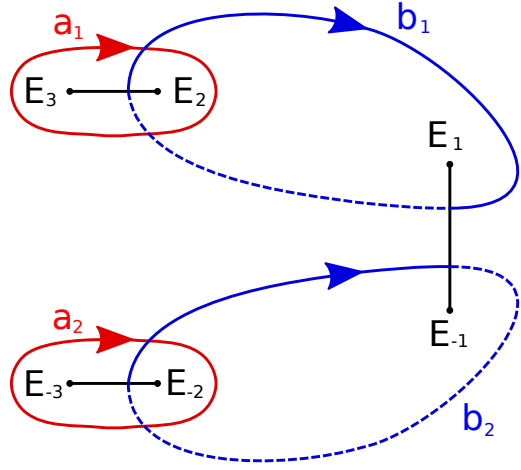}
\caption{\label{fig_SupRiem}{\it The basis in the homology $\mathcal{B}$ on the Riemann surface $\mathcal{S}$}}
\end{center}
\end{figure}
Let us denote by $\omega_1$ and $\omega_2$ the normalized holomorphic differentials satisfying 
\begin{align}
\oint_{a_j} \omega_k = 2\pi i  \delta_{jk}, \quad\quad j,k = 1,2. 
\end{align}
The period matrix $\mathbb{B}$ is defined as follows (\cite{BelBob},\cite{FarKra})
\begin{align}\label{defmatperiod}
\mathbb{B}_{jk} = \oint_{b_j}\omega_k, \quad\quad j,k= 1,2.
\end{align}
This is a $2\times 2$ symmetric matrix whose real part is negative definite. As a consequence, one can define the theta-function as follows
\begin{align}
&\Theta : \mathbb{C}^2 \rightarrow \mathbb{C}\\
& \Theta\lp \mathbf{z} \rp = \sum_{\mathbf{n}\in\mathbb{Z}^2} \exp\lb \frac{1}{2}\langle\mathbf{n},\mathbb{B}\mathbf{n}\rangle  + \langle \mathbf{n},\mathbf{z} \rangle\rb
\end{align}
for all $\mathbf{z}\in\mathbb{C}^2$. Here
\begin{align}
\langle \mathbf{x},\mathbf{y} \rangle = x_1y_1 + x_2y_2, \quad\quad \mathbf{x}, \mathbf{y}\in \mathbb{C}^2.
\end{align}
The solution $q\lp x,t;\epsilon \rp$ of the focusing NLS equation individuated by the RH-Problem \ref{Zafira_RH} via (\ref{Formula_Beni}) is then given by 
\begin{align}
\label{formula_Michela}
q\lp x,t;\epsilon \rp = A c\frac{\Theta\lp  i \mathbf{V}x  +  i \mathbf{W}t - \mathbf{D}  +\mathbf{r}  \rp}{\Theta\lp  i \mathbf{V}x  +  i \mathbf{W}t - \mathbf{D}    \rp}\exp\lp - i  E x +  i  N t \rp
\end{align}
where
\begin{align}
\mathbf{D} = \lp 2\pi  i  \rp \lp
\begin{array}{c}
\frac{\alpha}{2\pi} \\
\frac{\overline{\alpha}}{2\pi} 
\end{array}
\rp
+
\mathbb{B}\lp\begin{array}{c}
\frac{\beta}{2\pi}\\
-\frac{\overline{\beta}}{2\pi}
\end{array}\rp.
\end{align}
To determine the other parameters appearing in this formula, let us introduce the abelian integrals $\mbox{d}\Omega_1$, $\mbox{d}\Omega_2$ and $\mbox{d}\Omega_3$ determined by the following three conditions:
\begin{itemize}
\item
\begin{align}
\oint_{a_k}\mbox{d}\Omega_j = 0 \quad\quad k= 1,2; \, j= 1,2,3. 
\end{align}
\item 
\begin{align}
\begin{array}{l}
\mbox{d}\Omega_1\lp P \rp = \pm\lp 1 + o\lp 1 \rp \rp \mbox{d}z,\\
\mbox{d}\Omega_2\lp P \rp = \pm\lp 4z + o\lp 1 \rp  \rp\mbox{d}z ,\\
\mbox{d}\Omega_3\lp P \rp = \pm\lp \frac{1}{z} + o\lp 1 \rp \rp\mbox{d}z,
\end{array} \quad\quad\quad P\rightarrow \infty^{\pm}
\end{align}
\item Each of the abelian integrals $\Omega_j\lp P \rp$ has singularities only at $\infty^{\pm}$.
\end{itemize}
The numbers $E$, $N$ and $\omega_0$ are then individuated by the asymptotic expansions
\begin{align} \label{compleanno}
\begin{array}{l}
\int_{E_{-1}}^{P}\mbox{d}\Omega_1 = \pm\lp z - \frac{E}{2} + \mathcal{O}\lp 1 \rp \rp,\\
\int_{E_{-1}}^{P}\mbox{d}\Omega_2 = \pm\lp 2z^2 + \frac{N}{2} + \mathcal{O}\lp 1 \rp  \rp,\\
\int_{E_{-1}}^{P}\mbox{d}\Omega_3 = \pm\lp \ln z -\frac{1}{2}\ln \omega_0  +\mathcal{O}\lp 1 \rp \rp,
\end{array} \quad\quad\quad P\rightarrow \infty^{\pm}.
\end{align}
The two-dimensional, complex vectors $\mathbf{V}$, $\mathbf{W}$ and $\mathbf{r}$ are instead given by
\begin{align} \label{Julio}
V_j = \oint_{b_j}\mbox{d}\Omega_1, \quad\quad W_j = \oint_{b_j} \mbox{d}\Omega_2, \quad\quad r_j = -\oint_{b_j} \mbox{d}\Omega_3,  
\end{align} 
for $j=1,2$.
\\
Let us remark that symmetries (\ref{Denisa}) induce the constraints
\begin{align} \label{constraints}
V_2 = \overline{V_1}, \quad W_2 = \overline{W_1}, \quad r_2 = \overline{r_1}, \quad D_2 = -\overline{D_1} .
\end{align}
Our particular choice of a basis also implies that the quantities $E$ and $N$ are real, and that $\omega_0$ is negative. Finally 
\begin{align}
A = 2\sqrt{\omega_0}>0 
\end{align}
and $c$ is an arbitrary constant, provided that $\la c \ra=1$. 
\vspace{1em}

Our main result consists of the following two theorems:

\begin{theorem}\label{th_Soliton}
Let $\beta$ equal $\pi$. Then the solution $q\lp x,t;\epsilon \rp$ of the focusing NLS equation given by (\ref{Formula_Beni}), corresponding to the Riemann-Hilbert problem \ref{Zafira_RH}, converges, as $\epsilon$ tends to $0^+$, to the limit solution
\begin{align}\label{Krasilnikova}
q\lp x,t \rp = Ac_1\frac{
\cosh\lp \eta x+ \phi t  - i  \sigma\rp + B\cos\lp \xi x + \theta t -\alpha -  i  \rho \rp
}{
\cosh\lp \eta x+ \phi t \rp + B\cos\lp \xi x + \theta t -\alpha \rp
}
\exp\lp - i  E x +  i  N t \rp
\end{align}
The convergence of this limit is uniform on compact subsets of the $\lp x,t \rp$-plane. The value of the parameters in (\ref{Krasilnikova}) is given by (all real)
\begin{align}
&A = \Imag\lp E_1 \rp\ ,\qquad
E = 2\Real\lp E_1 \rp, \label{siitestesso}  \\ 
&N = -2\ls 2\Real\lp E_1 \rp^2 -  \Imag\lp E_1 \rp^2 \rs, \label{sempre}\\
& B = \frac{1}{\left| E_3 - E_{-3} \right|\left| E_1 - E_{-1} \right|}\lp \left| E_3 - E_1 \right| - \left| E_3 - E_{-1} \right| \rp^2, \\
& \xi + i \eta =  2 i \sqrt{  i \lp E_{-3} -E_1 \rp }\sqrt{  i \lp E_{-3} - E_{-1} \rp }, \\
& \theta + i \phi =  4 i \ls E_{-3} + \Real\lp E_1 \rp \rs\sqrt{  i \lp E_{-3} - E_{1} \rp }\sqrt{  i \lp E_{-3} - E_{-1} \rp }, \\
& \rho +i\sigma = -2\ln \ls \frac{
\sqrt{ i \lp E_{-3} - E_{-1} \rp} - \sqrt{ i \lp E_{-3} -E_1 \rp}
}{
\sqrt{ i \lp E_{-3} - E_{-1} \rp} + \sqrt{ i \lp E_{-3} -E_1 \rp}
} \rs. \\
\end{align}
Here, all the square roots are understood to assume their principal value:
\begin{align}
\sqrt{z} := \sqrt{\la z \ra} \exp\lp \frac{ i\arg{z} }{2} \rp, \quad\quad z\in \mathbb{C}\backslash\lp -\infty, 0 \rs.
\end{align}
Finally, $c_1$ is some constant complex number such that $\la c_1 \ra = 1$. 
\end{theorem}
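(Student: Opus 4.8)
The plan is to treat the $\epsilon\to 0^+$ limit as a genuine degeneration of the Riemann–Hilbert problem \ref{Zafira_RH}, in which the two short bands $(E_2,E_3)$ and $(E_{-3},E_{-2})$ collapse onto the points $E_3$ and $E_{-3}$. First I would specialize $\beta=\pi$ and record that $e^{-i\pi\stre}=-\1$, so that the jump on $(E_1,E_2)$ and on $(E_{-2},E_{-1})$ reduces to the scalar $-\1$. In the limit these two arcs become $(E_1,E_3)$ and $(E_{-3},E_{-1})$, the surface \ref{Christa} degenerates from genus two to genus zero, and the only surviving pair of branch points is $E_1,E_{-1}$. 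The middle arc $(E_{-1},E_1)$ keeps its off-diagonal $-i\sdue$ jump conjugated by $e^{i\tf\stre}$; this is exactly the data of the plane-wave (``unstable condensate'') background, and from its genus-zero solution I expect to read off the background parameters, which should already produce $A=\Imag(E_1)$, $E=2\Real(E_1)$ and the value of $N$ in (\ref{sempre}).

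Next I would build the limiting model solution explicitly. I would remove the two $-\1$ jumps by a scalar factor $\chi(z)=\sqrt{(z-E_1)(z-E_{-1})/((z-E_2)(z-E_{-2}))}$ with cuts on the gaps; writing $\bPsi=\chi\,\hat\bPsi$ leaves $\hat\bPsi$ with no jump across the gaps and with the off-diagonal jumps unchanged elsewhere. The crucial effect is that, as $\epsilon\to 0^+$, $\chi$ develops inverse-square-root singularities at $E_3,E_{-3}$; combined with the collapsing off-diagonal bands these force a pair of \emph{simple poles} of the limiting solution at $E_3$ and $E_{-3}$, with nilpotent residues whose phases are fixed by $\tf+\tfrac{\alpha}{2}$ evaluated at the collapse points. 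Thus the limiting problem is a meromorphic Riemann–Hilbert problem: plane-wave background on $(E_{-1},E_1)$ plus one conjugate pair of poles, that is, exactly a one-soliton-on-condensate configuration. I would then solve it in closed form using the Joukowski-type uniformizer $\gamma(z)=((z-E_1)/(z-E_{-1}))^{1/4}$ together with the exponentials $e^{\pm i\tf}$; the background contributes the $e^{-iEx+iNt}$ factor and the poles contribute, through their residue conditions, the $\cosh$ and $\cos$ terms of (\ref{Krasilnikova}).

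The heart of the argument, and the step I expect to be the main obstacle, is to justify this passage to the limit rigorously and uniformly. I would introduce local parametrices in small disks of radius $\asymp\sqrt{\epsilon}$ around $E_3$ and $E_{-3}$, solving there the exact local model of a shrinking off-diagonal band abutting a $-\1$ arc (after rescaling $\zeta=(z-E_3)/\epsilon$), and glue them to the outer model carrying the emerging poles. Setting $\mathcal E(z)=\bPsi(z;\epsilon)\,\bPsi^{\mathrm{mod}}(z)^{-1}$ away from these disks, the resulting error problem has jumps of the form $\1+\mathcal O(\epsilon^{\kappa})$ for some $\kappa>0$, \emph{uniformly} for $(x,t)$ in a compact set: this uniformity is what makes the limit tractable, since at fixed $(x,t)$ the phase $\tf$ stays bounded on the contour and no steepest-descent analysis is needed. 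The standard small-norm theory then yields $\mathcal E=\1+\mathcal O(\epsilon^{\kappa})$ uniformly, and hence $\bPsi(z;\epsilon)\to\bPsi^{\mathrm{mod}}(z)$. The delicate points are the correct matching of the local and outer models, so that the residues come out finite and with the right phases, and the control of the worsened endpoint growth produced by $\chi$ near $E_1,E_{-1}$.

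Finally I would extract the solution and identify the constants. From (\ref{Formula_Beni}), $q(x,t;\epsilon)=-2\lim_{z\to\infty}z\,\bPsi_{12}(z;\epsilon)$ converges to $-2\lim_{z\to\infty}z\,\bPsi^{\mathrm{mod}}_{12}(z)$, which I would compute from the explicit model. Matching the result to (\ref{Krasilnikova}) then determines every parameter as a point-evaluation of the model data: $A,E,N$ from the large-$z$ expansion of the background, the complex wavenumber $\xi+i\eta$ and frequency $\theta+i\phi$ from the derivatives of $\tf$ at the collapse point expressed through $\sqrt{i(E_{-3}-E_1)}\sqrt{i(E_{-3}-E_{-1})}$, the phase $\rho+i\sigma$ from the value of $\chi$ (equivalently the Szeg\"o/transmission factor) at the pole, and the amplitude $B$ from the relative weight of the solitonic and background terms; a final check reproduces the closed forms listed in the statement.
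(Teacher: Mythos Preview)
Your overall architecture matches the paper's: an outer genus--zero model $\psi_0$ producing the background parameters $A,E,N$, local parametrices near the collapsing cuts, a residual (error) problem solved by small--norm theory, and extraction of $q$ via (\ref{Formula_Beni}). Where your proposal diverges from the paper, and where a genuine gap appears, is in the mechanism that turns the shrinking band into a \emph{nilpotent} pole. The paper does not remove the diagonal jumps with your scalar $\chi$; it conjugates by $e^{-id(z;\epsilon)\sigma_3}$, where $d$ solves a scalar RHP and is fixed by two moment conditions that force a constant $\hat\alpha(\epsilon)$ into the phase of the off--diagonal jump on $(E_2,E_3)$. The crucial computation is $\hat\alpha(\epsilon)=\tfrac{i\beta}{\pi}\ln(\epsilon/4i\mathcal H)+\mathcal O(\epsilon\ln\epsilon)$, so that $e^{\pm i\hat\alpha}\sim(\epsilon/4i\mathcal H)^{\mp\beta/\pi}$. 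Combined with the $\mathcal O(\epsilon)$ off--diagonal entries of the local parametrix $\mathcal N_1$, one entry of the residual jump scales like $\epsilon^{1-\beta/\pi}$ and the other like $\epsilon^{1+\beta/\pi}$: at $\beta=\pi$ exactly one survives with the finite value $\mathcal H/(z-E_3)$, and the other is $\mathcal O(\epsilon^2)$. That asymmetric cancellation is what produces the nilpotent residue and its precise value; it is the source of all the soliton constants in the statement.

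Your scalar $\chi$ cannot supply this mechanism: it multiplies both columns equally, so it creates no upper/lower--triangular asymmetry, and there is no logarithmically divergent phase to balance the $\mathcal O(\epsilon)$ of the shrinking band. As written, a local parametrix for your transformed problem would give a residual jump that is $\1+\mathcal O(\epsilon)$ symmetrically, hence converging to $\1$ --- i.e.\ the plane wave, not the soliton. (There is also an internal inconsistency: you first remove the $-\1$ jumps with $\chi$, and then ask the local parametrix to model an ``off--diagonal band abutting a $-\1$ arc''.) Even granting that some alternative bookkeeping could be made to work, the step ``these force a pair of simple poles \dots\ with nilpotent residues'' needs an explicit derivation of the residue, and that derivation is exactly the content of the paper's computation of $\hat\alpha(\epsilon)$ and of the limit $M_Q(z;0)$ in Proposition~\ref{Ali}, followed by the explicit solution of RHP~\ref{Eresidual}. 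Without an analogue of that computation, the identification of $B,\xi,\eta,\theta,\phi,\rho,\sigma$ in your final paragraph has no input to feed on.
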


\begin{theorem} \label{th_onda_piana}
If $p$ belongs to $\lp -\pi;\pi \rp$ then the solution $q\lp x,t; \epsilon \rp$ of the focusing NLS equation given in \ref{Formula_Beni}, corresponding to the Riemann-Hilbert problem \ref{Zafira_RH} converges, as $epsilon$ tends to $0^{+}$,  to 
\begin{align}
q\lp x,t \rp = -A c_2 \exp\lp - i  E x +  i  N t \rp.
\end{align}
The convergence is uniform on the compact subsets of the $\lp x,t \rp$-plane. Here $c_2$ is some complex number of modulus one, while the other parameters  are given by (\ref{siitestesso}-\ref{sempre}).
\end{theorem}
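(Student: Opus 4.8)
The plan is to run the same degeneration analysis used for Theorem \ref{th_Soliton} and to isolate the single place where the hypothesis $\beta=\pi$ was used, replacing it by $\beta\in(-\pi,\pi)$ (the parameter denoted $p$ in the statement). Geometrically, as $\epsilon\to0^+$ the two bands $(E_2,E_3)$ and $(E_{-3},E_{-2})$ of the contour $\Gamma$ collapse onto the points $E_3$ and $E_{-3}$, so that the only surviving band is $(E_{-1},E_1)$, while the gaps $(E_1,E_2)$ and $(E_{-2},E_{-1})$ open up into $(E_1,E_3)$ and $(E_{-3},E_{-1})$, carrying the constant jump $e^{-i\beta\sigma_3}$. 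I would first remove the collapsing bands explicitly: on each of them the jump is $e^{-i[\tf+\frac{\alpha}{2}]\sigma_3}(-i\sigma_2)e^{i[\tf+\frac{\alpha}{2}]\sigma_3}$, which an algebraic (Zhukovsky-type) conjugator $\Xi_\epsilon(z)$, analytic off $(E_2,E_3)\cup(E_{-3},E_{-2})$ and built from the fourth roots $\big((z-E_2)/(z-E_3)\big)^{1/4}$ and its complex conjugate, matches exactly. Since $E_2\to E_3$ one has $\Xi_\epsilon\to\mathbf 1$, with an $\mathcal O(1/z)$ correction at infinity, uniformly on compact sets away from $E_{\pm3}$; moreover $\Xi_\epsilon$ is designed so that its $\pm\tfrac14$ singularities cancel those of $\bPsi$ at $E_{\pm2},E_{\pm3}$. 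Setting $\widehat\bPsi:=\bPsi\,\Xi_\epsilon^{-1}$ therefore produces a problem whose jumps live only on $(E_{-1},E_1)$ and on the two extended gaps, whose only remaining $|z-E_j|^{-1/4}$ singularities are at $E_{\pm1}$, and which converges to an explicit limiting RHP as $\epsilon\to0^+$.

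Next I would solve the limiting RHP by elementary functions, exactly as in the proof of Theorem \ref{th_Soliton}. In the limit the off-diagonal band $(E_{-1},E_1)$ is handled by the radical $R(z)=\sqrt{(z-E_1)(z-E_{-1})}$ together with the $\tf$-conjugation, while the diagonal jump $e^{-i\beta\sigma_3}$ on the two gaps is resolved by a scalar function $\delta(z)^{\sigma_3}$, with $\delta$ obtained from a Cauchy integral of $-\beta$ over the gaps against $R$. Here is the one and only point where the value of $\beta$ enters: the scalar solution $\delta$ carries the exponent $\beta/2\pi$ at the endpoints $E_{\pm1}$, and it is precisely when $\beta/2\pi=\tfrac12$, i.e.\ $\beta=\pi$, that $\delta$ acquires the extra half-integer monodromy producing the bound state responsible for the $\cosh$/$\cos$ soliton in \eqref{Krasilnikova}. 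For $\beta\in(-\pi,\pi)$ no such bound state is created, the $(x,t)$-dependent part of the parametrix is absent, and the limiting solution reduces to the purely oscillatory condensate. Reading off the $\mathcal O(1/z)$ coefficient through \eqref{Formula_Beni} then yields $q(x,t)=-Ac_2\exp(-iEx+iNt)$, the overall sign being fixed by the $-i\sigma_2$ model on the band, with $A$, $E$, $N$ given by the same endpoint expansion as in Theorem \ref{th_Soliton}, hence by \eqref{siitestesso}--\eqref{sempre}, and with $c_2$ a unimodular constant coming from the normalization of $\delta$ at infinity.

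Finally I would justify the convergence, which is where the main obstacle lies. The difficulty is that the collapsing bands carry $\mathcal O(1)$ jumps on arcs of vanishing length, so a naive small-norm estimate fails in $L^\infty$: the conjugator $\Xi_\epsilon$ does not tend to $\mathbf 1$ near $E_{\pm3}$, where it blows up like $|z-E_{\pm3}|^{-1/4}$. I would deal with this by enclosing $E_3$ and $E_{-3}$ in small fixed disks of radius $r$, inside which $\Xi_\epsilon$ already solves the local jump exactly and $\widehat\bPsi$ is bounded, and by comparing $\widehat\bPsi$ with the global parametrix only on the boundary circles $|z-E_{\pm3}|=r$ and on the gaps, where the mismatch comes entirely from $\Xi_\epsilon-\mathbf 1=\mathcal O(\epsilon)$. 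Since $\tf=2tz^2+xz$ together with its derivatives is bounded for $(x,t)$ in any compact set, all these estimates are uniform there; a standard small-norm Riemann--Hilbert argument then yields an error of order $\mathcal O(\epsilon)$ in the $\mathcal O(1/z)$ coefficient, which through \eqref{Formula_Beni} proves uniform convergence of $q(x,t;\epsilon)$ to the stated plane wave on compact subsets of the $(x,t)$-plane.
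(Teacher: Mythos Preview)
Your approach reverses the order of the two reductions relative to the paper, and that reversal hides the mechanism that actually distinguishes $|\beta|<\pi$ from $\beta=\pi$. In the paper the \emph{gap} jumps are removed \emph{first}, by the scalar function $d(z;\epsilon)$ of Section~2.1; this forces the constant $\hat\alpha(\epsilon)=\tfrac{i\beta}{\pi}\ln\!\bigl(\epsilon/(4i\mathcal H)\bigr)+o(1)$ into the phase of the collapsing-band jump (RHP~\ref{TroppeDonne}). Only after this does one build the local model $\psi_{\pm1}$ for the short cuts. The payoff is Proposition~\ref{Ali}: the off-diagonal entries of $M_Q-\mathbf 1$ carry the factors $\epsilon\,e^{\pm i\hat\alpha}\sim\epsilon^{1\mp\beta/\pi}$, so both vanish precisely when $|\beta|<\pi$, and one of them survives when $\beta=\pi$. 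That scaling is the whole story for Theorem~\ref{th_onda_piana}; once $M_Q\to\mathbf 1$, the small-norm argument gives $Q\to\mathbf 1$ and the limit is read off from $\psi_0$ alone.

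Your route---apply the Zhukovsky conjugator $\Xi_\epsilon$ first and then mop up the gap with a scalar $\delta$ in the limiting problem---does not reach this mechanism, and the error analysis you sketch has a genuine gap. The conjugation $\widehat\bPsi=\bPsi\,\Xi_\epsilon^{-1}$ does \emph{not} leave the gap jump equal to $e^{-i\beta\sigma_3}$: on $(E_1,E_2)$ the new jump is $\Xi_\epsilon e^{-i\beta\sigma_3}\Xi_\epsilon^{-1}$, and a direct computation with $\mu_1=((z-E_3)/(z-E_2))^{1/4}$ shows its off-diagonal entries behave like $\sin\beta\cdot\epsilon^{1/2}(z-E_2)^{-1/2}$ near $E_2$. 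Thus inside your disk around $E_3$ there is still a gap arc terminating at $E_2$ whose jump is singular, and neither $\Xi_\epsilon$ (which only matches the band) nor the global $\delta^{\sigma_3}$ (whose cut ends at $E_3$, not $E_2$) accounts for it. Consequently the claim that ``the mismatch comes entirely from $\Xi_\epsilon-\mathbf 1=\mathcal O(\epsilon)$'' is not justified; and your proposed location of the $\beta$-dependence---a half-integer exponent of $\delta$ at $E_{\pm1}$---is not where the dichotomy lives (the relevant endpoint would in any case be $E_{\pm3}$, and even there the effect is not the $\epsilon^{1\pm\beta/\pi}$ scaling that decides the matter). The fix is exactly the paper's step: remove the gap jump \emph{before} opening the lens around the collapsing band, so that the divergent phase $\hat\alpha(\epsilon)$ is already built into the local model.
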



\begin{observation}
Let both $\alpha$ and $\beta$ equal $\pi$ and $E_3$ tend to $E_1$. Then the solution (\ref{Krasilnikova}) tends to the following limit
\begin{align}
\lim_{E_3\to E_1} q\lp x,t \rp = c_3 \cdot\small{ \lb 1 - \frac{16i\Imag\lp E_1 \rp^2 t + 4}
{ 4\Imag\lp E_1 \rp^2\ls x + 4\Real\lp E_1 \rp t \rs^2 + 16\Imag\lp E_1 \rp^4 t^2 +1 } \rb}\cdot e^{\lb -2i\Real\lp E_1 \rp x -2i\ls 2\Real\lp E_1 \rp^2 - \Imag\lp E_1 \rp^2 \rs t \rb}
\end{align}
uniformly on any compact set of the $(x,t)-$plane. Here $c_3$ is some constant complex number such that $\la c_3 \ra = 1$ This reduces, by means of a simple galilean transformation and of a scaling, to the well-known Peregrine breather:
\begin{align}
q_{Peregrine}\lp x;t \rp = \lb 1-\frac{16it + 4}{4x^2 + 16t^2 + 1} \rb \exp\lp 2it \rp
\end{align}
\end{observation}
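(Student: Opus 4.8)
The plan is to start from formula (\ref{Krasilnikova}) of Theorem \ref{th_Soliton} (already specialized to $\beta=\pi$), impose $\alpha=\pi$, and carry out an asymptotic analysis as the small complex parameter $\delta:=E_3-E_1$ tends to $0$. The first observation is that the parameters $A$, $E$, $N$ in (\ref{siitestesso})--(\ref{sempre}) depend only on $E_1$; hence they are already at their limiting values and the carrier $\exp\lp -iEx+iNt\rp$ coincides with the one in the asserted limit, so it suffices to analyse the ratio of $\cosh$/$\cos$ terms. Expanding the parameters of (\ref{Krasilnikova}) as $\delta\to0$ one finds $B\to1$, while $\xi,\eta,\theta,\phi,\rho\to0$ and $\sigma\to-2\pi$; the natural scale is $\zeta:=\sqrt{\,i\,\overline{\delta}\,}$, of order $|\delta|^{1/2}$, since $\xi+i\eta=2i\sqrt{2\Imag\lp E_1\rp}\,\zeta+o\lp\zeta\rp$, and similarly for the other combinations.

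Because $\cosh 0=\cos 0=1$ and $B\to1$, both the numerator and the denominator of the fraction in (\ref{Krasilnikova}) vanish, at rate $|\delta|$, so the limit is of $0/0$ type and must be resolved by a second-order expansion. Writing $u:=\eta x+\phi t$, $v:=\xi x+\theta t$ (both $O\lp|\delta|^{1/2}\rp$) and $s:=\lp\rho+i\sigma\rp+2\pi i$ (so that $\Real s=\rho$, $\Imag s=\sigma+2\pi$, both $O\lp|\delta|^{1/2}\rp$), and using $\cosh w=1+w^2/2+\cdots$, $\cos w=1-w^2/2+\cdots$ together with $B=1-|\delta|/\Imag\lp E_1\rp+O\lp|\delta|^2\rp$, the denominator reduces to $\lp1-B\rp+\tfrac12\lp u^2+v^2\rp+o\lp|\delta|\rp$ and the numerator to $\lp1-B\rp+\tfrac12\ls\lp u-i\Imag s\rp^2+\lp v-i\Real s\rp^2\rs+o\lp|\delta|\rp$ (the carrier-independent sign flips coming from $\alpha=\pi$ and the $2\pi i$-periodicity of $\cosh$ reducing $\sigma\to-2\pi$ to $\Imag s$).

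The key point — and the step I expect to be the main obstacle — is that the individual quantities $u$, $v$, $\Real s$, $\Imag s$ depend on the direction $\arg\delta$ along which $E_3\to E_1$, yet the quadratic combinations that actually enter must conspire to be direction-independent for the limit to exist. Introducing $\nu:=\Imag\lp E_1\rp$ and $X:=x+4\Real\lp E_1\rp t$, a direct computation of the leading coefficients shows exactly this: $\tfrac12\lp u^2+v^2\rp=|\delta|\lp4\nu X^2+16\nu^3 t^2\rp+o\lp|\delta|\rp$, the cross term $u\,\Imag s+v\,\Real s=16\nu|\delta|\,t+o\lp|\delta|\rp$, and $\lp\Real s\rp^2+\lp\Imag s\rp^2=8|\delta|/\nu+o\lp|\delta|\rp$, all independent of $\arg\delta$. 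Verifying these cancellations (equivalently, that the answer is insensitive to the path of approach) and that they reproduce precisely the stated coefficients is the crux; the remainder is bookkeeping.

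Dividing numerator and denominator by $|\delta|/\nu$ and letting $\delta\to0$ then gives
\[
\frac{\cosh\lp u-i\Imag s\rp-B\cos\lp v-i\Real s\rp}{\cosh u-B\cos v}\ \longrightarrow\ 1-\frac{16\,i\,\nu^2 t+4}{1+4\nu^2 X^2+16\nu^4 t^2},
\]
which is exactly the rational profile in the statement; multiplying by the unchanged prefactor $A c_1$ and carrier recovers the claimed limit, and uniformity on compact subsets of the $\lp x,t\rp$-plane follows because every remainder above is $o\lp|\delta|\rp$ uniformly for $\lp x,t\rp$ in a compact set. Finally, a Galilean boost removing the drift $4\Real\lp E_1\rp t$ together with the carrier, followed by the scaling symmetry $q\lp x,t\rp\mapsto\lambda\,q\lp\lambda x,\lambda^2 t\rp$ of fNLS with $\lambda=\nu$ (i.e.\ $x'=\nu X$, $t'=\nu^2 t$, which also normalizes the condensate amplitude $A=\nu$ to $1$), brings the denominator to $4x'^2+16t'^2+1$ and the carrier to $\exp\lp2it'\rp$, yielding $q_{Peregrine}$.
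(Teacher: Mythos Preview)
The paper records this Observation without proof; it is stated as a direct consequence of formula (\ref{Krasilnikova}) and left to the reader. Your proposal therefore cannot be compared to a proof in the paper, but it stands on its own as a correct and complete argument.

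Your computation is right in all essentials. The expansions $1-B=|\delta|/\nu+O(|\delta|^2)$, $\xi^2+\eta^2=8\nu|\delta|+o(|\delta|)$, $|s|^2=8|\delta|/\nu+o(|\delta|)$ and the cross term $u\,\Imag s+v\,\Real s=16\nu|\delta|\,t+o(|\delta|)$ all check out, and the crucial point you flag --- that these quadratic combinations are independent of $\arg\delta$ even though $u,v,\Real s,\Imag s$ separately are not --- is exactly what makes the limit well-defined. The mechanism is transparent once one notices that, to leading order, $\Real s=\eta/\nu$ and $\Imag s=-\xi/\nu$, so the cross term reduces to $2(\xi^2+\eta^2)t$; you might state this identity explicitly, as it makes the cancellation obvious rather than miraculous. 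One small remark: your derivation correctly retains the amplitude factor $A=\Imag(E_1)$ in front of the rational profile, whereas the displayed limit in the Observation writes only $c_3$ with $|c_3|=1$; this is a minor inaccuracy in the paper's statement, not in your argument, and your subsequent scaling to the Peregrine form handles it properly.
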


\section{Proof of the main result} \label{Conti}

 The idea of the proof is that of approximating a preliminary modification $\bPhi\lp z; \epsilon \rp$ of the actual solution $\bPsi\lp z,\epsilon \rp$ by an appropriate  matrix $\widetilde{\bPhi}\lp z;\epsilon \rp$, independent of $\epsilon$ in a neighborhood of infinity and then giving an explicit estimate of the "error matrix" $Q\lp z;\epsilon \rp = \bPhi\lp z;\epsilon \rp
 \widetilde\bPhi(z;\epsilon)^{-1}$. 
The gist of the idea is as follows; as $\epsilon\to 0$ the underlying Riemann surface $\mathcal{S}$ degenerates to a rational curve (i.e. of genus $0$). While the exact solution $\bPsi(z;\epsilon)$ contains (in principle) Riemann Theta functions, the limiting solution and the approximation can be written in terms of completely elementary functions. The goal of the paper is thus to show that this limiting solution can be obtained {\em directly, without using any special $\Theta$ function}.

In general the idea applies also to the situation of more branch points $E_{\pm k}, \ k =1,\dots, R $ under the degeneration $|E_{\pm k} - E_{\pm(k+1)}| \to 0$ for $k = 2,3,\dots, R$ but we opted for the presentation of the simplest case not to overburden the reader with unnecessary complications which only obfuscate the underlying idea. 

Even more generally, one could decide to degenerate a smaller subset of pairs of branch points; in this case the limiting hyperelliptic curve has still a positive genus. Again, this case is not so interesting to us because the approximation of the solution would still require the use of Theta functions, in part defying our driving  purpose of simplicity and readability of the result.

\subsection{Eliminating the jumps on $\lp E_{-2};E_{-1} \rp$ and $\lp E_{1} ; E_{2} \rp$.}
We start by a preliminary step that transforms the original RHP \ref{Zafira_RH} into a more suitable (although completely equivalent) problem. This step does not involve any approximation.

Let us introduce the function 

\begin{align} \label{def_operat_radice}
R\lp z; \epsilon \rp := \lp z - E_2 \rp\lp z - E_{-1} \rp\lp z - E_{-3} \rp\sqrt{ \frac{ z - E_3 }{ z - E_2 } }\sqrt{ \frac{ z - E_{1} }{ z - E_{-1} } }\sqrt{ \frac{ z - E_{-2} }{ z - E_{-3} } }
\end{align}
where the square roots in the r.h.s. are understood to assume their principal value.
As an easy consequence, $R\lp z;\epsilon \rp$ is defined and analytic on
\begin{align}
\mathbb{C}\backslash \lb \ls E_{-3},E_{-2} \rs \cup \ls E_{-1},E_1 \rs \cup \ls E_2 , E_3 \rs \rb
\end{align}
and 
\begin{align}
\lim_{z\to \infty}R\lp z;\epsilon \rp = 1.
\end{align}
It is convenient to introduce the function 
\begin{align}
d\lp z ;\epsilon\rp := \frac{R\lp z;\epsilon \rp}{2\pi i } & \ls \hatalpha \int_{\ls E_{2};E_{3}  \rs} \frac{\mbox{d}\zeta}{\lp \zeta-z \rp R_{+}\lp \zeta;\epsilon \rp} 
-\beta \int_{\ls E_{1};E_{2}  \rs} \frac{\mbox{d}\zeta}{\lp \zeta-z \rp R_{+}\lp \zeta ;\epsilon \rp}
\right. \\ & \left.
-\beta \int_{\ls E_{-2};E_{-1}  \rs} \frac{\mbox{d}\zeta}{\lp \zeta-z \rp R_{+}\lp \zeta ; \epsilon \rp}
+ \overline{\hatalpha} \int_{\ls E_{-3};E_{-2}  \rs} \frac{\mbox{d}\zeta}{\lp \zeta-z \rp R_{+}\lp \zeta ; \epsilon\rp}
\rs
\end{align}
with  $\hatalpha$  defined by means of the equations
\begin{align}\label{trentotto_barra}
\tAbb\lp \epsilon \rp
\lp\begin{array}{c}
 \hatalpha  \\ \overline{\hatalpha}  
\end{array}\rp = 
\tBbb\lp \epsilon \rp
\lp\begin{array}{c}
 \beta  \\ \beta   
\end{array}\rp.
\end{align}
where
\begin{align}
\tAbb\lp \epsilon \rp := \ls\begin{array}{cc}
\int_{\ls E_2;E_3 \rs} \frac{\mbox{d}\zeta}{R_{+}\lp \zeta ; \epsilon \rp} & 
\int_{\ls E_{-3};E_{-2} \rs} \frac{\mbox{d}\zeta}{R_{+}\lp \zeta ; \epsilon\rp}\\
\int_{\ls E_2;E_3 \rs} \frac{\zeta\mbox{d}\zeta}{R_{+}\lp \zeta ; \epsilon \rp} & 
\int_{\ls E_{-3};E_{-2} \rs} \frac{\zeta\mbox{d}\zeta}{R_{+}\lp \zeta ; \epsilon \rp}
\end{array}\rs, \quad\quad \mbox{and} \quad\quad 
\tBbb\lp \epsilon \rp := \ls\begin{array}{cc}
\int_{\ls E_1;E_2 \rs} \frac{\mbox{d}\zeta}{R\lp \zeta;\epsilon \rp} & 
\int_{\ls E_{-2};E_{-1} \rs} \frac{\mbox{d}\zeta}{R\lp \zeta;\epsilon \rp}\\
\int_{\ls E_1;E_2 \rs} \frac{\zeta\mbox{d}\zeta}{R\lp \zeta;\epsilon \rp} & 
\int_{\ls E_{-2};E_{-1} \rs} \frac{\zeta\mbox{d}\zeta}{R\lp;\epsilon \zeta \rp}
\end{array}\rs
\end{align}
{\\
Let us first show that the matrix $\tAbb\lp \epsilon \rp$ has a finite limit as $\epsilon\to 0$. Indeed, consider for example the entry (1,1). By the change of variable $\zeta = E_3 + \epsilon t$ 
one obtains
\begin{align}
\int_{\ls E_2,E_3\rs}\frac{\mbox{d}\zeta}{R_{+}\lp \zeta ; \epsilon \rp} = - \int_{0}^{1}\frac{\epsilon\mbox{d}t}{R_{+}\lp E_3 + \epsilon t ;\epsilon \rp}.
\end{align}
On the other side, it is easy to obtain that
\begin{align}
\frac{1}{R_{+}\lp E_3 + \epsilon t ; \epsilon\rp} = \frac{i}{\epsilon\sqrt{t\lp 1-t \rp}}\lb \frac{1}{\lp E_3 - E_{-1} \rp\lp E_3 - E_{-3} \rp} \sqrt{ \frac{E_3 - E_{-1}}{ E_{3} - E_1 } } + r\lp t;\epsilon \rp\rb
\end{align}
where $r\lp t, \epsilon \rp\leq M\epsilon$ for some positive constant $M$ independent of $t\in\ls 0,1 \rs$ and $\epsilon$ sufficiently small. This yields immediately
\begin{align}
\int_{\ls E_2,E_3 \rs}\frac{\mbox{d}\zeta}{R_{+}\lp \zeta ; \epsilon  \rp} &= \frac{-i}{\lp E_3 - E_{-1} \rp\lp E_3 - E_{-3} \rp}\sqrt{\frac{E_3 - E_{-1}}{E_3 - E_1}}\int_0^1 \frac{\mbox{d}t}{\sqrt{t\lp 1-t \rp}} + \mathcal{O}\lp \epsilon \rp\\
& = \frac{-i\pi}{\lp E_3 - E_{-1} \rp\lp E_3 - E_{-3} \rp}\sqrt{\frac{E_3 - E_{-1}}{E_3 - E_1}} + \mathcal{O}\lp \epsilon \rp
\end{align}
With analogous considerations for the other entries of the matrix one obtains
\begin{align}\label{Manon_Lescaut}
\tAbb\lp \epsilon \rp \,\,=\,\,
\ls\begin{array}{cc}
\frac{-i\pi}{\lp E_3 - E_{-1} \rp\lp E_3 - E_{-3} \rp}\sqrt{\frac{E_3 - E_{-1}}{E_3 - E_1}}  & 
\frac{i\pi}{\lp E_{-3} - E_{1} \rp\lp E_3 - E_{-3} \rp}\sqrt{\frac{E_{-3} - E_1}{E_{-3} - E_{-1}}} \\
\frac{-i\pi E_{3}}{\lp E_3 - E_{-1} \rp\lp E_3 - E_{-3} \rp}\sqrt{\frac{E_{3} - E_{-1}}{E_3 - E_1}}  &
\frac{i\pi E_{-3}}{\lp E_{-3} - E_{1} \rp\lp E_{3} - E_{-3} \rp}\sqrt{\frac{E_{-3} - E_1}{E_{-3} - E_{-1}}} 
\end{array}\rs+ \mathcal{O}\lp \epsilon \rp, \quad\quad \epsilon\to 0. 
\end{align}
This also implies that the determinant of $\tAbb\lp \epsilon \rp$ is different than zero for sufficiently small $\epsilon$. This fact is actually more general  and holds for every triple of distinct $E_1,E_2$ and $E_3$ with positive imaginary parts (see \cite{FarKra}). Using (\ref{Manon_Lescaut}), elementary manipulations yield
\begin{align}\label{Audrine}
\tAbb\lp \epsilon \rp^{-1}\tBbb\lp \epsilon \rp = 
\ls \mbox{diag}\lp\begin{array}{cc}  
-\frac{E_3 - E_{-1}}{i\pi}\sqrt{\frac{E_3 - E_1}{E_3 - E_{-1}}} &
-\frac{E_{-3}-E_1}{i\pi}\sqrt{\frac{ E_{-3} - E_{-1} }{ E_{-3} - E_1 }} 
\end{array} \rp + \mathcal{O}\lp \epsilon \rp \rs\cdot\small{
\ls\begin{array}{cc}
\int_{\ls E_1,E_2 \rs}\omega_1\lp \zeta ; \epsilon \rp\mbox{d}\zeta &
\int_{\ls E_{-2},E_{-1} \rs} \omega_1\lp \zeta ; \epsilon \rp\mbox{d}\zeta\\
\int_{\ls E_1,E_2 \rs}\omega_2\lp \zeta ; \epsilon \rp\mbox{d}\zeta &
\int_{\ls E_{-2},E_{-1} \rs} \omega_2\lp \zeta ; \epsilon \rp\mbox{d}\zeta
\end{array}\rs}
\end{align}
where the functions $\omega_1\lp \zeta ; \epsilon \rp$ and $\omega_2\lp \zeta; \epsilon \rp$ are defined as follows
\begin{align}
\omega_1\lp \zeta ; \epsilon \rp = \frac{\zeta - E_{-3}}{R\lp \zeta;\epsilon \rp}, \quad\quad 
\omega_2\lp \zeta ; \epsilon \rp = \frac{\zeta - E_3}{R\lp \zeta;\epsilon \rp}.
\end{align}
Taking the limit inside the integral and performing an elementary integration one obtains
\begin{align}
\int_{\ls E_{-2},E_{-1} \rs}\omega_1\lp \zeta ; \epsilon \rp \mbox{d}\zeta & \to
\int_{\ls E_{-2},E_{-1} \rs}\sqrt{\frac{ \zeta - E_{-1} }{ \zeta - E_{1} }}\frac{\mbox{d}\zeta}{\lp \zeta - E_3 \rp\lp \zeta - E_{-1} \rp}\\
& = -\frac{1}{\lp E_{3} - E_{-1} \rp}\sqrt{\frac{E_3 - E_{-1}}{E_3 - E_{1}}}
\ln \ls \frac{ \lp \la E_1 - E_3 \ra - \la E_3-E_{-1} \ra \rp^2 }{ \la E_1 - E_{-1} \ra\la E_3 - E_{-3} \ra } \rs, \quad\quad \epsilon \to 0.
\end{align}
On the other hand, introducing the notation
\begin{align}
f\lp \zeta \rp := \frac{1}{ \zeta - E_{-1} } \sqrt{\frac{\zeta - E_{-1}}{\zeta - E_1}} \sqrt{\frac{\zeta -E_{-3}}{\zeta - E_{-2}}} 
\end{align}
one has
\begin{align}
\int_{\ls E_1,E_2 \rs} \omega_1\lp \zeta;\epsilon \rp\mbox{d}\zeta &= \int_{\ls E_1,E_2 \rs} \frac{f\lp \zeta,\epsilon \rp}{\zeta-E_2}\sqrt{\frac{\zeta - E_2}{\zeta - E_3}} \mbox{d}\zeta\\
&=  f\lp E_2 \rp \int_{\ls E_1,E_2 \rs}\frac{1}{\zeta - E_2}\sqrt{\frac{\zeta -E_2}{\zeta -E_3}}\mbox{d}\zeta + \int_{\ls E_1, E_2 \rs} \frac{f\lp \zeta,\epsilon \rp - f\lp E_2, \epsilon \rp }{\zeta -E_2}\sqrt{\frac{\zeta-E_2}{\zeta - E_3}}\mbox{d}\zeta \\
& = f\lp E_2,\epsilon \rp \left.\ln \ls t + \lp t-1 \rp\sqrt{\frac{t+1}{t-1}} \rs\right|^{t=1}_{t=\frac{2\lp E_1-E_3 \rp -\epsilon}{\epsilon}} + \int_{\ls E_1,E_3 \rs} \frac{ f\lp \zeta,0 \rp -f\lp E_3;0 \rp}{\zeta-E_3}\mbox{d}\zeta + \mathcal{O}\lp \epsilon \rp\\
& = \frac{1}{E_3-E_{-1}}\sqrt{\frac{E_3-E_{-1}}{E_{3}}} \lb \ln \epsilon - \ln \ls 4\lp E_1-E_3 \rp \rs \rb + \int_{\ls E_1,E_3 \rs} \frac{ f\lp \zeta,0 \rp -f\lp E_3;0 \rp}{\zeta-E_3}\mbox{d}\zeta + \mathcal{O}\lp \epsilon\ln \epsilon \rp.  
\end{align}
A direct integration of the last integral and a bit of algebra yield then
\begin{align}
\int_{\ls E_1,E_2 \rs} \omega_1\lp \zeta ; \epsilon \rp \mbox{d}\zeta = \frac{1}{E_3-E_{-1}}\sqrt{\frac{E_3-E_{-1}}{E_3-E_1}}\lb \ln \epsilon + \ln \ls \frac{E_{-1}-E_1}{16\lp E_3 - E_1 \rp\lp E_3 - E_{-1} \rp} \rs + \mathcal{O}\lp \epsilon\ln \epsilon \rp  \rb
\end{align}
After analogous considerations for the other integrals appearing in (\ref{Audrine}), one obtains
\begin{align}
\tAbb\lp \epsilon \rp^{-1}\tBbb\lp \epsilon \rp = \ls
\begin{array}{cc}
-\frac{1}{i\pi}\lb \ln \epsilon + \ln \ls \frac{E_{-1}-E_1}{16\lp E_3-E_1 \rp\lp E_3-E_{-1} \rp} \rs \rb &
\frac{1}{i\pi}\ln \ls \frac{\lp \la E_1-E_3 \ra - \la E_{3} - E_{-1} \ra \rp^2}{\la E_{1}-E_{-1} \ra\la E_3-E_{-3} \ra} \rs\\
-\frac{1}{i\pi}\ln \ls \frac{\lp \la E_1 -E_3 \ra- \la E_3-E_{-1} \ra \rp^2}{\la E_1 -E_{-1} \ra\la E_{3}-E_{-3} \ra} \rs &
-\frac{1}{i\pi} \lb \ln \epsilon+\ln \ls \frac{E_1-E_{-1}}{16\lp E_{-3}-E_1 \rp\lp E_{-3}-E_{-1} \rp} \rs \rb
\end{array}\rs + \mathcal{O}\lp \epsilon\ln \epsilon \rp.
\end{align}
From (\ref{trentotto_barra}) and elementary algebra then,
\begin{align}\label{giorgia_uno}
\hat{\alpha}= \frac{i\beta}{\pi}\ln \lp \frac{\epsilon}{4i\mathcal{H}} \rp + \mathcal{O}\lp \epsilon\ln \epsilon \rp
\end{align}
where $\mathcal{H}$ is the following, $\epsilon$-independent constant:
\begin{align}\label{giorgia_due}
\mathcal{H} = \frac{\lp E_3-E_1 \rp\lp E_3-E_{-1} \rp\lp \la E_1-E_{3} \ra - \la E_3-E_{-1} \ra \rp^2}{2\mbox{Im}\lp E_1 \rp^2\mbox{Im}\lp E_3 \rp}.
\end{align}
The following proposition summarizes  the analytic properties of $d(z;\epsilon)$ that will be used in the sequel.}
\begin{proposition} \label{Jenny}
The function $d\lp z;\epsilon \rp$ is analytic on 
\begin{align}
\mathbb{C}\backslash \lb \ls E_{-3} ; E_{-2} \rs \cup  \ls E_{-2} ; E_{-1} \rs \cup  \ls E_{1} ; E_{2} \rs \cup  \ls E_2 ; E_3 \rs \rb.
\end{align}
It satisfies the following boundary values relations:
\begin{align}
\left\{
\begin{array}{ll}
d_{+}\lp z ;\epsilon\rp + d_{-}\lp z ;\epsilon\rp = \hatalpha(\epsilon) & z\in \lp E_2;E_3   \rp \\
d_{+}\lp z ;\epsilon\rp - d_{-}\lp z ;\epsilon\rp = -\beta & z\in  \lp E_1;E_2  \rp \\
d_{+}\lp z ;\epsilon\rp - d_{-}\lp z;\epsilon \rp = -\beta   & z\in  \lp E_{-2};E_{-1}   \rp \\
d_{+}\lp z;\epsilon \rp + d_{-}\lp z;\epsilon \rp = \overline{\hatalpha(\epsilon)} & z\in \lp E_{-3};E_{-2}   \rp.
\end{array}
\right. .
\end{align}
Moreover  the limit below exists, finite and real
\begin{align}
d_{\infty} (\epsilon):= \lim_{z\rightarrow \infty} d\lp z ;\epsilon\rp \in \mathbb R.
\end{align}
 Letting then $\epsilon$ tend to $0^{+}$, also $\lim_{\epsilon\rightarrow 0^+} d_{\infty}\lp \epsilon \rp $
exists and it is finite.
\end{proposition}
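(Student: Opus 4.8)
\emph{Proof proposal. Analyticity and the boundary relations.} The whole statement can be read off from the factorisation $d(z;\epsilon)=R(z;\epsilon)\,I(z;\epsilon)$, where $I(z;\epsilon)$ denotes the bracketed sum of Cauchy transforms in the definition of $d$, together with the asymptotics already established in (\ref{Manon_Lescaut})--(\ref{giorgia_due}). First, $R(z;\epsilon)$ is analytic off its three branch cuts $\ls E_2;E_3\rs\cup\ls E_{-1};E_1\rs\cup\ls E_{-3};E_{-2}\rs$, and each Cauchy transform $\int_\gamma\frac{\mbox{d}\zeta}{(\zeta-z)R_+(\zeta;\epsilon)}$ is analytic for $z\notin\gamma$; near the endpoints $1/R_+$ is of order $\la\zeta-E_j\ra^{-1/2}$, hence integrable, so each transform is well defined. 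Thus $d$ is analytic off the union of the four integration segments and the three cuts. On a cut of $R$ one has $R_+=-R_-$, so the Plemelj--Sokhotski relation $I_+-I_-=c_k/R_+$ gives $d_++d_-=R_+(I_+-I_-)=c_k$; on a segment where $R$ is continuous ($R_+=R_-$) the same computation gives $d_+-d_-=c_k$. Reading off $c_k=\hatalpha,-\beta,-\beta,\overline{\hatalpha}$ produces the four recorded relations; the identical computation on $\ls E_{-1};E_1\rs$ (a cut of $R$ carrying no integral) yields the homogeneous relation $d_++d_-=0$, which is exactly what leaves the off-diagonal jump there untouched under the subsequent conjugation.

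\emph{Finiteness of $d_\infty$.} Expanding at infinity with $R(z;\epsilon)=z^3\bigl(1+O(1/z)\bigr)$ and $I(z;\epsilon)=-\sum_{j\ge0}a_j\,z^{-j-1}$, where $a_j:=\frac{1}{2\pi i}\sum_k c_k\int_{\gamma_k}\frac{\zeta^j\,\mbox{d}\zeta}{R_+(\zeta;\epsilon)}$, the product $d=R\,I$ is bounded at infinity iff the coefficients of $z^2$ and $z^1$ vanish, i.e. iff $a_0=a_1=0$. These two equations are precisely the linear system (\ref{trentotto_barra}) that defines $\hatalpha$; since $\det\tAbb(\epsilon)\neq0$ for small $\epsilon$ they hold, and hence $d_\infty(\epsilon)=-a_2$ is finite.

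\emph{Reality.} A direct check gives the Schwarz symmetry $\overline{R(\bar z;\epsilon)}=R(z;\epsilon)$, both sides being the branch of $\sqrt{\prod_{j}(z-E_j)(z-E_{-j})}$ with the same cuts and leading term $+z^3$. Since the densities are placed conjugate-symmetrically ($\hatalpha$ on $\ls E_2;E_3\rs$ against $\overline{\hatalpha}$ on $\ls E_{-3};E_{-2}\rs$, and the real number $\beta$ on the mirror pair $\ls E_1;E_2\rs$, $\ls E_{-2};E_{-1}\rs$), the substitution $\zeta\mapsto\bar\zeta$ in the integrals yields $\overline{d(\bar z;\epsilon)}=d(z;\epsilon)$. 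Evaluating on the real axis beyond the cuts shows $d$ is real there, so $d_\infty(\epsilon)\in\mathbb{R}$.

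\emph{The limit $\epsilon\to0^+$ and the main obstacle.} Here $d_\infty(\epsilon)=-a_2$ couples the logarithmically divergent $\hatalpha=\frac{i\beta}{\pi}\ln\!\bigl(\frac{\epsilon}{4i\mathcal H}\bigr)+O(\epsilon\ln\epsilon)$ of (\ref{giorgia_uno}) to the moment $\int_{\ls E_2;E_3\rs}\zeta^2/R_+\,\mbox{d}\zeta$, which by the argument of (\ref{Manon_Lescaut}) has a finite limit, whereas the moments over $\ls E_1;E_2\rs$ and $\ls E_{-2};E_{-1}\rs$ diverge like $\ln\epsilon$. The step I expect to be hardest is to show that these logarithms cancel in $-a_2$, leaving a finite value. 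I would establish the $\zeta^2$-moment asymptotics by the same change of variable $\zeta=E_3+\epsilon t$ used for $\tAbb(\epsilon)$, then collect the coefficient of $\ln\epsilon$ in $a_2$; reality forces this coefficient to be real, and the very algebraic mechanism that makes (\ref{trentotto_barra}) solvable forces it to vanish, so that $\lim_{\epsilon\to0^+}d_\infty(\epsilon)$ exists and is finite.
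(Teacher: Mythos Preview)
Your argument tracks the paper's closely: Plemelj for the jump relations (which the paper does not spell out at all), the two moment conditions $a_0=a_1=0$ for the finiteness of $d_\infty$, Schwarz symmetry for reality, and expansion of the $\zeta^2$-moments for the $\epsilon\to 0^+$ limit. Your side remark that $d_++d_-=0$ on $(E_{-1};E_1)$ is correct and is precisely what the subsequent conjugation needs; strictly speaking this shows $d$ is \emph{not} continuous across that segment, so the domain of analyticity in the proposition is stated a little too optimistically, but that is an inaccuracy in the statement rather than in your proof.

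The one point that needs sharpening is your last paragraph. The sentence ``reality forces this coefficient to be real, and the very algebraic mechanism that makes (\ref{trentotto_barra}) solvable forces it to vanish'' does not do the job: a real nonzero coefficient of $\ln\epsilon$ is perfectly compatible with reality, and mere solvability ($\det\tAbb\neq 0$) forces nothing. The actual mechanism is structural. Both the limits (\ref{Manon_Lescaut}) and the $\ln\epsilon$-expansions after (\ref{Audrine}) show that the leading part of every $\zeta^j$-moment is a $j$-independent factor times $E_{\pm 3}^{\,j}$ (the value of $\zeta^j$ at the coalescing endpoint). Consequently the coefficient of $\ln\epsilon$ in $a_j$ has the form $P\,E_3^{\,j}+Q\,E_{-3}^{\,j}$ with $P,Q$ independent of $j$. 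Now $a_0=a_1=0$ hold \emph{exactly}, since they are what \emph{define} $\hat\alpha$; their $\ln\epsilon$-coefficients therefore vanish, and as $E_3\neq E_{-3}$ this forces $P=Q=0$, whence the $\ln\epsilon$-coefficient of $a_2$ is $P\,E_3^{2}+Q\,E_{-3}^{2}=0$. This is the computation the paper is compressing into ``expand each of the integrals in (\ref{Dorina}) up to leading term and use (\ref{giorgia_uno}).''
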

\begin{proof}
Let us first fix $\epsilon>0$. One has 
\begin{align} \label{Beppe}
\int_{\ls E_{-3} ; E_{-2} \rs}\frac{\mbox{d}\zeta}{\lp \zeta-z \rp R_{+}\lp \zeta; \epsilon \rp} =& -\frac{1}{z}\int_{\ls E_{-3} ; E_{-2} \rs}\frac{\mbox{d}\zeta}{ R_{+}\lp \zeta ; \epsilon \rp}
 -\frac{1}{z^2}\int_{\ls E_{-3} ; E_{-2} \rs}\frac{\zeta\mbox{d}\zeta}{ R_{+}\lp \zeta ; \epsilon \rp} \\ &-\frac{1}{z^3}\int_{\ls E_{-3} ; E_{-2} \rs}\frac{\zeta^2\mbox{d}\zeta}{ R_{+}\lp \zeta ; \epsilon \rp} + \mathcal{O}\lp z^4 \rp, \quad \quad z\rightarrow \infty.
\end{align}
The same expansion holds when $\ls E_{-3};E_{-3} \rs$ with $\ls E_{-2};E_{-1} \rs$, $\ls E_1;E_2 \rs$ or $\ls E_2;E_3 \rs$. On the other side,
\begin{align}
R\lp z;\epsilon \rp\sim z^3, \quad\quad z\rightarrow \infty.
\end{align}
So $d_{\infty}(\epsilon)$ exists and it is finite if 
\begin{subequations}\label{dentista}
\begin{align}
\hatalpha(\epsilon) \int_{\ls E_2;E_3 \rs}\frac{\mbox{d}\zeta}{R_{+}\lp \zeta ; \epsilon \rp} - \beta \int_{\ls E_1;E_2 \rs}\frac{\mbox{d}\zeta}{R_{+}\lp \zeta ; \epsilon \rp} - \beta \int_{\ls E_{-2};E_{-1} \rs}\frac{\mbox{d}\zeta}{R_{+}\lp \zeta ; \epsilon \rp} + \overline{\hatalpha(\epsilon)} \int_{\ls E_{-3};E_{-2} \rs}\frac{\mbox{d}\zeta}{R_{+}\lp \zeta ; \epsilon \rp} = 0
\end{align}
and
\begin{align}
\hatalpha(\epsilon) \int_{\ls E_2;E_3 \rs}\frac{\zeta\mbox{d}\zeta}{R_{+}\lp \zeta;\epsilon \rp} - \beta \int_{\ls E_1;E_2 \rs}\frac{\zeta\mbox{d}\zeta}{R_{+}\lp \zeta ; \epsilon \rp} - \beta \int_{\ls E_{-2};E_{-1} \rs}\frac{\zeta\mbox{d}\zeta}{R_{+}\lp \zeta ; \epsilon \rp} + {\hatalpha(\epsilon)} \int_{\ls E_{-3};E_{-2} \rs}\frac{\zeta\mbox{d}\zeta}{R_{+}\lp \zeta ; \epsilon \rp} = 0.
\end{align}
\end{subequations}
These ones hold because they are equivalent to equations (\ref{trentotto_barra}). Again from (\ref{Beppe}),
\begin{align} \label{Dorina}
d_{\infty}(\epsilon) = \hatalpha(\epsilon) \int_{\ls E_2;E_3 \rs}\frac{\zeta^2\mbox{d}\zeta}{R_{+}\lp \zeta ; \epsilon\rp} - \beta \int_{\ls E_1;E_2 \rs}\frac{\zeta^2\mbox{d}\zeta}{R_{+}\lp \zeta ; \epsilon \rp} - \beta \int_{\ls E_{-2};E_{-1} \rs}\frac{\zeta^2\mbox{d}\zeta}{R_{+}\lp \zeta ; \epsilon \rp} + \hatalpha(\epsilon) \int_{\ls E_{-3};E_{-2} \rs}\frac{\zeta^2\mbox{d}\zeta}{R_{+}\lp \zeta ; \epsilon\rp}.
\end{align}
This one turns out to be real because $R_{+}\lp \overline{z} ; \epsilon \rp$ equals $R_{+
}\lp z ; \epsilon\rp$ in view of the Schwarz reflection principle. In order to prove that the limit of $d_\infty$ as $\epsilon$ tends to zero exists and it is finite, it is sufficient to expand each of the integrals in (\ref{Dorina}) up to leading term and use (\ref{giorgia_uno}). 
\end{proof}
We now use $d(z;\epsilon)$ in order to ``normalize'' the solution of the RHP \ref{Zafira_RH}; to this extent we  introduce  

\begin{align} \label{Def_Phi}
\bPhi \lp z, \epsilon \rp := e^{ i  d_{\infty}(\epsilon)\sigma_3} \bPsi\lp z; \epsilon \rp e^{- i  d\lp z ;\epsilon \rp\sigma_3}.
\end{align}
The dependence of $\bPhi$ from $x$ and $t$ is understood although not explicitly indicated. For any fixed ($x$, $t$ and) $\epsilon>0$, $\bPsi\lp z; \epsilon \rp$ solves Riemann-Hilbert problem \ref{Zafira_RH} if and only if $\bPhi\lp z, \epsilon \rp $ satisfies the following one:
\begin{RHP}
\label{TroppeDonne}
Determine a matrix-valued function 
\begin{align}
\bPhi : \mathbb{C}\backslash \lb \ls E_{-3}  ; E_{-2} \rs \cup \ls E_{-1}  ; E_{1} \rs \cup \ls E_{2}  ; E_{3} \rs \rb \longrightarrow Mat\lp 2\times 2,\mathbb{C} \rp
\end{align}
such that
\begin{enumerate}
\item[\textbf{i}] $\bPhi $ is analytic on its domain.\\
\item[$\mathbf{ii}$] For every point $z$ of $\lp E_{-3};E_{-2} \rp \cup \lp E_{-1};E_{1} \rp \cup \lp E_{2};E_{3} \rp$
one has 
\begin{align}
\bPhi _{+}\lp z \rp = \bPhi _{-}\lp z \rp M_{\bPhi }\lp z \rp
\end{align}
where 
\begin{align}\label{saltoPhi}
M_{\bPhi }= \left\{\begin{array}{cc}
\exp\left\{-i\ls \tf + \frac{\alpha -\hat{\alpha}(\epsilon)}{2} \rs\stre\right\}\lp -i\sdue\rp \exp\left\{i\ls \tf +\frac{\alpha-\hat{\alpha}(\epsilon)}{2} \rs\stre\right\} & z\in \lp E_2 ; E_3 \rp\\
\exp\left\{-i \tf  \stre\right\}\lp - i\sdue\rp \exp\left\{i \tf \stre\right\} & z\in \lp E_{-1} ; E_1 \rp\\
\exp\left\{-i\ls \tf + \frac{\alpha -\overline{\hat{\alpha}(\epsilon)}}{2} \rs\stre\right\}\lp -i\sdue\rp \exp\left\{i\ls \tf + \frac{\alpha - \overline{\hat{\alpha}(\epsilon)}}{2} \rs\stre\right\} & z\in \lp E_{-3} ; E_{-2} \rp\\
\end{array} \right.
\end{align}
\item[$\mathbf{iii}$] The following asymptotic behaviour holds:
\begin{align}
\bPhi \lp z \rp = \mathbf 1 + \mathcal{O}\lp \frac{1}{z} \rp, \quad\quad \mbox{as} \quad z\rightarrow \infty.
\end{align}
\item[$\mathbf{iv}$] The growth condition
\begin{align}
\bPhi \lp z \rp = \mathcal{O}\lp \left| z - E_j \right|^{-\frac{1}{2}} \rp \quad\quad \mbox{as} \quad z\rightarrow E_j
\end{align}
is satisfied for $j= \pm 1, \pm 2, \pm 3$.
\end{enumerate}
\end{RHP}

\begin{remark}
The simplification of the RHP \ref{TroppeDonne} relative to the RHP \ref{Zafira_RH} is that the diagonal jump matrices in the latter problem have been eliminated.
It is the solution of the RHP \ref{TroppeDonne}  that will be approximated in the small $\epsilon$ regime.
\end{remark}

\subsection{Approximation  for $\bPhi\lp z;\epsilon\rp $ as $\epsilon \to 0$}
In order to approximate solutions $q\lp x,t;\epsilon \rp$ (\ref{Formula_Beni}) of fNLS, we construct a uniform approximation of $\bPhi\lp z;\epsilon \rp$ (and so of $\bPsi\lp z;\epsilon \rp$) on $\C$
\vspace{1em}

Let us introduce two simply connected regions $\mathbb{D}_{1}$ and $\mathbb{D}_{-1}$ surrounding the segments $\ls  E_{-3};E_{-2} \rs$ and $\ls E_2;E_3 \rs$, as indicated in figure \ref{fig_RHPRegions}. Let us also put 
\begin{align}
\mathbb{D}_{0}:= \mathbb{C}\backslash \lb \mathbb{D}_{1}\cup \mathbb{D}_{-1}\cup \ls E_{-1};E_{1} \rs \rb.
\end{align}
We will now define an approximation $\widetilde{\bPhi }\lp z;\epsilon \rp$ for $\bPhi\lp z;\epsilon \rp $ as $\epsilon\to 0^{+}$, piecewise on each of these regions.
\begin{figure}
\resizebox{0.5\textwidth}{!}{\input{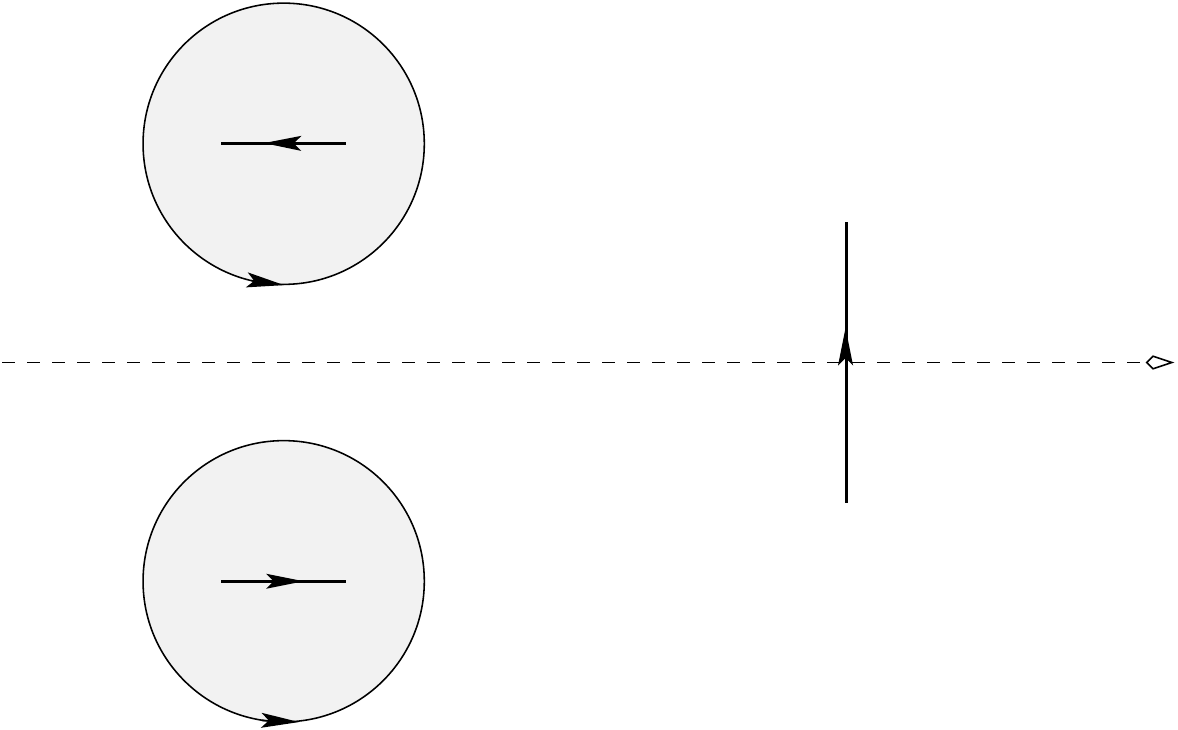_t}}
\caption{\label{fig_RHPRegions}{\it The regions $\mathbb D_{\pm 1}$ surrounding the small spectral cuts $[E_{\pm 3}, E_{\pm 2}]$ with $E_{\pm 2} = E_{\pm 3} + \epsilon$ and $\epsilon>0$, small. The approximation $\widehat \bPhi(z;\epsilon)$ is constructed as a piecewise analytic function.}}
\end{figure}

 {\bf Step 0}: We start by defining the $\epsilon$-independent approximation in the region $\mathbb D_0$. Let us introduce the matrix-valued function 
\begin{align}
\mathcal{N}_{0}\lp z \rp =
\ls \begin{array}{cc}
\frac{ \mu_{0}\lp z \rp + \mu_{0}\lp z \rp^{-1} }{2} & i\frac{ \mu_{0}\lp z \rp - \mu_{0}\lp z \rp^{-1}  }{2}\\
\frac{  \mu_{0}\lp z \rp - \mu_{0}\lp z \rp^{-1} }{2i} & \frac{ \mu_{0}\lp z \rp + \mu_{0}\lp z \rp^{-1}  }{2}
\end{array}\rs = 
\left(\frac {z-E_1}{z-E_{-1}}\right)^ {-\frac {\sigma_2}4 } ,\quad\quad z\in\mathbb{C}\backslash\ls E_{-1};E_{1} \rs
\end{align}
where
\begin{align}
\mu_{0}\lp z \rp = \sqrt[4]{\frac{z-E_1}{z-E_{-1}}}, \quad\quad z\in\mathbb{C}\backslash \ls E_{-1};E_{1} \rs.
\end{align}
The function $\mu_{0}\lp z \rp$ is understood to be analytic and single-valued on its domain and to tend to 1 as $z$ tends to infinity. Let us also introduce the scalar function
\begin{align}
d_{0}\lp z;x,t \rp := \theta\lp z;x,t \rp -\ls 2t\lp z + \frac{E_1 + E_{-1}}{2} \rp + x \rs R_0\lp z \rp, 
\end{align}
where $\theta$ is defined in (\ref{def_theta}) and
\begin{align}
R_0\lp z \rp := \lp z - E_{-1} \rp\sqrt{\frac{z-E_1}{z-E_{-1}}}
\end{align}
The function $R_0$ is also understood to be analytic and single-valued on $\mathbb{C}\backslash \ls E_{-1};E_{1} \rs$ and to asymptotically behave like $z$ as $z$ tends to infinity.\\
Using these objects we define
\begin{align} \label{Def_psi_zero}
\psi_0\lp z \rp = e^{- i  d_{0,\infty}\sigma_3}\mathcal{N}_{0}\lp z \rp e^{ i  d_0\lp z \rp\sigma_3}, \quad\quad z\in \mathbb{C}\backslash\ls E_{-1};E_1 \rs
\end{align}
where
\begin{align}
d_{0,\infty} := \lim_{z\rightarrow \infty} d_0\lp z \rp = \ls \frac{\lp E_1 + E_{-1} \rp^2}{2} + \frac{\lp E_{1} - E_{-1} \rp^2}{4} \rs t + \frac{\lp E_1 + E_{-1} \rp}{2} x.
\end{align}
By means of a straightforward though long calculation one can prove the following 
\begin{proposition}
The function $\psi_0\lp z \rp$ is analytic on $\mathbb{C}\backslash\ls E_{-1};E_1 \rs$. For every point $z$ of $\ls E_{-1};E_1 \rs$ one has
\begin{align}
\psi_{0,+}\lp z \rp = \psi_{0,-}\lp z \rp M_{\psi_0}\lp z \rp, \quad\quad z\in\ls E_{-1};E_1 \rs
\end{align}
where
\begin{align}
M_{\psi_0}\lp z \rp = e^{- i \theta\lp z;x,t \rp\sigma_3}\lp - i \sigma_2 \rp e^{ i \theta\lp z;x,t \rp\sigma_3}, \quad\quad z \in \ls E_{-1};E_{1} \rs.
\end{align}
Moreover, 
\begin{align}
\psi_0\lp z \rp &= \mathbf 1 + \mathcal{O}\lp \frac{1}{z} \rp, \quad\quad z\rightarrow \infty.\\
\psi_0 (z) & = \mathcal O( |z - E_{\pm 1})|^{-\frac 1 4})\ ,\ \ z \to E_{\pm 1}.
\label{asymp0}
\end{align}
\end{proposition}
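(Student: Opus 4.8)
The plan is to verify the four assertions in turn, the only substantial one being the jump relation. Analyticity (first assertion) is immediate: $\theta$ is entire, while $\mu_0$ and $R_0$ — hence $\mathcal N_0$ and $d_0$, and therefore the product $\psi_0 = e^{-id_{0,\infty}\sigma_3}\mathcal N_0\, e^{id_0\sigma_3}$ — are analytic and single-valued on $\mathbb C\backslash[E_{-1};E_1]$ by construction.

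For the jump relation I would first record the boundary values of the multivalued building blocks on the cut. Since $R_0(z)=\sqrt{(z-E_1)(z-E_{-1})}$ carries a single square root whose branch cut is $[E_{-1};E_1]$, one has $R_{0,+}=-R_{0,-}$; and because $\mu_0^4=\frac{z-E_1}{z-E_{-1}}$ while $\mu_0^2=R_0/(z-E_{-1})$ changes sign across the cut, the fourth root satisfies $\mu_{0,+}=i\,\mu_{0,-}$ (the sign being fixed by the chosen branch and by the orientation of $[E_{-1};E_1]$ running from $E_{-1}$ to $E_1$). Substituting $\mu_{0,+}=i\mu_{0,-}$, $\mu_{0,+}^{-1}=-i\mu_{0,-}^{-1}$ into the explicit entries of $\mathcal N_0$ and using $\det\mathcal N_0\equiv1$, a direct matrix multiplication yields $\mathcal N_{0,+}=\mathcal N_{0,-}(-i\sigma_2)$.

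It remains to fold in the scalar factors. From $R_{0,+}=-R_{0,-}$ and the continuity of $\theta$ across the cut, the $R_0$-contributions to $d_0$ cancel in the sum, so that $d_{0,+}+d_{0,-}=2\theta(z;x,t)$. Hence
\[
M_{\psi_0}=\psi_{0,-}^{-1}\psi_{0,+}=e^{-id_{0,-}\sigma_3}\mathcal N_{0,-}^{-1}\mathcal N_{0,+}\,e^{id_{0,+}\sigma_3}=e^{-id_{0,-}\sigma_3}(-i\sigma_2)e^{id_{0,+}\sigma_3}.
\]
Using the anticommutation $\sigma_2\sigma_3=-\sigma_3\sigma_2$, i.e. $\sigma_2 e^{a\sigma_3}=e^{-a\sigma_3}\sigma_2$, one pulls the right exponential across $\sigma_2$ and collects $M_{\psi_0}=-i\,e^{-i(d_{0,+}+d_{0,-})\sigma_3}\sigma_2=-i\,e^{-2i\theta\sigma_3}\sigma_2=e^{-i\theta\sigma_3}(-i\sigma_2)e^{i\theta\sigma_3}$, which is precisely the claimed jump.

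For the behaviour at infinity I would Taylor-expand $\mu_0=1+\mathcal O(1/z)$, giving $\mathcal N_0=\mathbf 1+\mathcal O(1/z)$, and expand $R_0=\sqrt{(z-E_1)(z-E_{-1})}=z-\frac{E_1+E_{-1}}{2}+\mathcal O(1/z)$ inside $d_0$; the $z^2$- and $z$-terms cancel identically, leaving $d_0=d_{0,\infty}+\mathcal O(1/z)$ with $d_{0,\infty}$ exactly the stated constant. Since conjugation by the constant unimodular diagonal matrices $e^{\pm id_{0,\infty}\sigma_3}$ preserves the order of the remainder, $\psi_0=\mathbf 1+\mathcal O(1/z)$. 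Finally, near $E_{\pm1}$ one has $\mu_0^{\pm1}=\mathcal O(|z-E_{\pm1}|^{\mp1/4})$, so that $\mathcal N_0=\mathcal O(|z-E_{\pm1}|^{-1/4})$, while $R_0$ and $\theta$ stay bounded and hence so does $e^{id_0\sigma_3}$; this gives the growth bound $\psi_0=\mathcal O(|z-E_{\pm1}|^{-1/4})$. The one genuinely delicate point is the second paragraph: fixing the correct fourth-root monodromy $\mu_{0,+}=i\mu_{0,-}$ (rather than $-i$), since the opposite choice would produce $+i\sigma_2$ instead of $-i\sigma_2$; everything else is bookkeeping, the expansion at infinity being merely long.
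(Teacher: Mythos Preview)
Your proof is correct and is precisely the direct verification the paper alludes to without writing out (the paper says only that the proposition follows ``by means of a straightforward though long calculation'' and gives no details). One cosmetic point: in your last paragraph the $\pm$'s in ``$\mu_0^{\pm1}=\mathcal O(|z-E_{\pm1}|^{\mp1/4})$'' are slightly garbled --- at each endpoint it is exactly one of $\mu_0,\mu_0^{-1}$ that blows up like $|z-E_{\pm1}|^{-1/4}$ while the other vanishes --- but the conclusion $\mathcal N_0=\mathcal O(|z-E_{\pm1}|^{-1/4})$ is unaffected.
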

\textbf{Step 1}: We define an ($\epsilon$-dependent) approximation of $\bPhi\lp z;\epsilon\rp$ on $\mathbb{D}_{\pm1}$.
We define
\begin{align}\label{Le}
\mathcal{N}_{\pm 1}\lp z ; \epsilon \rp := 
\ls\begin{array}{cc}
\frac{\mu_{\pm 1}\lp z ; \epsilon \rp + \mu_{\pm 1}\lp z ; \epsilon \rp^{-1}}{2} & i\frac{\mu_{\pm 1}\lp z ; \epsilon \rp - \mu_{\pm 1}\lp z ; \epsilon \rp^{-1}}{2}\\
\frac{\mu_{\pm 1}\lp z ; \epsilon\rp - \mu_{\pm 1}\lp z ; \epsilon \rp^{-1}}{2i} & \frac{\mu_{\pm 1}\lp z ; \epsilon\rp + \mu_{\pm 1}\lp z ; \epsilon\rp^{-1}}{2}
\end{array}\rs, \quad\quad z\in \mathbb{C}\backslash\ls E_{\pm 2}; E_{\pm 3} \rs
\end{align}
where
\begin{align}\label{na}
\mu_{\pm 1}\lp z; \epsilon \rp := \lp \frac{z-E_{\pm 3}}{z- E_{\pm 2}} \rp^{\pm \frac 1 4},       \quad\quad z\in \mathbb{C}\backslash \ls E_{\pm 2}; E_{\pm 3} \rs.
\end{align}
Also in this case, the functions $\mu_{\pm 1}\lp z ; \epsilon\rp$ are understood to be analytic and single-valued on their domain, and they tend to one as $z$ tends to $\infty$.
We then put 
\begin{align}
\psi_1\lp z ;\epsilon\rp := e^{ -i\ls \theta\lp z;x,t \rp + \frac{\alpha-\hat{\alpha (\epsilon)}}{2} \rs\sigma_3 }\cdot \mathcal{N}_1\lp z;\epsilon \rp\cdot
 e^{i\ls \theta\lp z;x,t \rp + \frac{\alpha -\hatalpha(\epsilon)}{2} \rs\sigma_3},  
\end{align}
for $z\in \mathbb{C}\backslash\ls E_2;E_3 \rs$, and
\begin{align}
\psi_{-1}\lp z;\epsilon \rp := e^{ -i\ls \theta\lp z;x,t \rp + \frac{\alpha-\overline{\hat{\alpha}(\epsilon)}}{2} \rs\sigma_3 }\cdot \mathcal{N}_{-1}\lp z;\epsilon \rp\cdot
 e^{i\ls \theta\lp z;x,t \rp + \frac{\alpha -\overline{\hatalpha (\epsilon)} }{2} \rs\sigma_3},  
\end{align}
for $z\in \mathbb{C}\backslash\ls E_{-3};E_{-2} \rs$. 
With these objects one can finally define a candidate to approximate $\bPhi $ as follows:
\begin{align}\label{Def_Phi_tilde}
\widetilde{\bPhi }\lp z;\epsilon \rp := \lb
\begin{array}{ll}
\psi_0\lp z \rp & z\in \mathbb{D}_0\\
\psi_0\lp z \rp\psi_1\lp z;\epsilon \rp & z\in \mathbb{D}_1\\
\psi_0\lp z \rp\psi_{-1}\lp z;\epsilon \rp & z\in \mathbb{D}_{-1}.
\end{array}
\right.
\end{align}
\begin{remark}\label{Cristina}
$\widetilde{\bPhi }$ has the same jumps as $\bPhi $ on $\lp E_{-3} ; E_{-2}\rp \cup  \lp E_{-1};E_{1} \rp \cup \lp E_{2} ;E_{3} \rp$
but it has also some additional jumps on $\partial \mathbb{D}_{\pm 1}$.
\end{remark}

\subsection{Estimating the error}
We now discuss whether $\widetilde{\bPhi }$ is a good approximation for $\bPhi $ as $\epsilon \to 0^+$.  To  this purpose we the error matrix
\begin{align}
Q\lp z;\epsilon \rp := \bPhi \lp z;\epsilon \rp\widetilde{\bPhi }\lp z;\epsilon \rp^{-1}.
\label{81}
\end{align}
We anticipate already here that $Q(z;\epsilon)$ is {\bf not} close to the identity matrix $\mathbf 1$ as $\epsilon\to 0$ {\em in general}. Indeed this happens only if $\beta \in (-\pi, \pi)$ but not if $\beta = \pm \pi$ (See Prop. \ref{Ali} below).

As a consequence of its definition \eqref{81} the matrix $Q(z; \epsilon)$ has a discontinuity on  $\partial \mathbb D_\pm$:
\begin{align}
Q_{+}\lp z;\epsilon \rp = \bPhi _{+}\lp z;\epsilon \rp \widetilde{\bPhi }_{+}\lp z;\epsilon \rp^{-1} = \bPhi _{-}\lp z;\epsilon \rp\widetilde{\bPhi }_{-}\lp z;\epsilon \rp^{-1} \widetilde{\bPhi }_{-}\lp z;\epsilon \rp\widetilde{\bPhi }_{+}\lp z;\epsilon \rp^{-1} = Q_{-}\lp z;\epsilon \rp \ls \widetilde{\bPhi }_{-}\lp z;\epsilon \rp\widetilde{\bPhi }_{+}\lp z;\epsilon \rp^{-1} \rs
\end{align}
Consequently we deduce that $Q(z;\epsilon)$ is a piecewise analytic matrix--valued function on $\mathbb C\setminus \partial \mathbb D_{+1} \cup\partial \mathbb D_{-1}$ and 
\begin{align}
Q_+(z;\epsilon) &= Q_-(z;\epsilon) M_Q(z;\epsilon)\qquad  z\in \partial \mathbb{D}_{\pm 1}
\\
Q(z;\epsilon) &= \mathbf 1 + \mathcal O(z^{-1})\ ,\ \ \ \ |z|\to \infty\\
&M_{Q}\lp z;\epsilon \rp = \widetilde{\bPhi }_{-}\lp z ;\epsilon \rp \widetilde{\bPhi }_{+}\lp z;\epsilon \rp^{-1}= \psi_0\lp z \rp\psi_{\pm 1}\lp z ;\epsilon\rp^{-1}\psi_0\lp z \rp^{-1}
\end{align}
 The behaviour of $M_Q\lp z;\epsilon \rp$ as $\epsilon$ tends to zero depends on $\beta$, as stated in the following 
\begin{proposition} \label{Ali}
 { \bf 1]} Let $\beta$ belong to $( -\pi; \pi )$. Then, uniformly in $z\in \partial \mathbb D_{\pm 1}$, we have 
\begin{align}\label{Maldonado}
\lim_{\epsilon\rightarrow 0^{+}}M_{Q}\lp z ;\epsilon\rp = \mathbf 1.
\end{align}
{\bf 2]} Let instead $\beta$ equal $\pi$. Then, uniformly w.r.t. $z$
\begin{align} \label{Mo}
\lim_{\epsilon\rightarrow 0^{+}} M_{Q}\lp z;\epsilon \rp = \mathbf 1 +  \frac{\mathcal{H}e^{-i \ls 2\theta\lp z; x,t \rp + \alpha \rs}}{\lp z-E_3 \rp}\psi_0\lp z\rp  \sigma_+ \psi_0\lp z \rp^{-1} =: M_{Q}(z;0), \quad\quad z\in \partial \mathbb{D}_1
\\
\label{lina}
\lim_{\epsilon\rightarrow 0^+} M_{Q}\lp z;\epsilon \rp =\mathbf 1 -\frac{\overline{\mathcal{H}}e^{i\ls 2\theta\lp z;x,t \rp + \alpha \rs}}{\lp z-E_{-3} \rp} \psi_0\lp z \rp
\sigma_- \psi_0\lp z \rp^{-1}=: M_{Q}(z;0), \quad\quad z\in \partial \mathbb{D}_{-1}.
\end{align}
where $\sigma_+ = \le[\begin{array}{cc}
0& 1\\
0&0
\end{array}\ri],\ \ \sigma_- = \le[\begin{array}{cc}
0& 0\\1&0
\end{array}\ri].$
and $\mathcal H$   is in  (\ref{giorgia_due}).
In particular the convergence of the matrices $M_Q(z;\epsilon)$ is also in all the $L^p$ spaces, $1 \leq p \leq \infty$. 
\end{proposition}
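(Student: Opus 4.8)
The plan is to reduce the statement to the behaviour of $\psi_{\pm 1}\lp z;\epsilon \rp^{-1}$ on the fixed contours $\partial\mathbb{D}_{\pm 1}$. Indeed, from the expression $M_Q = \psi_0\lp z \rp\psi_{\pm 1}\lp z;\epsilon \rp^{-1}\psi_0\lp z \rp^{-1}$ derived just above, and since $\psi_0$ is $\epsilon$-independent and bounded together with its inverse on $\partial\mathbb{D}_{\pm 1}$ (a compact contour staying away from the branch points $E_{\pm 1}$), it suffices to control $\psi_{\pm 1}^{-1}$ and then conjugate by $\psi_0$ at the very end. First I would expand $\mu_{\pm 1}$ for small $\epsilon$: since $E_{\pm 2} = E_{\pm 3} + \epsilon$ with $\epsilon$ real, on $\partial\mathbb{D}_{\pm 1}$ (bounded away from $\ls E_{\pm 3};E_{\pm 2} \rs$) equation \eqref{na} gives, uniformly in $z$,
\begin{align}
\mu_{\pm 1}\lp z;\epsilon \rp = \lp 1 + \frac{\epsilon}{z - E_{\pm 3}} + \mathcal{O}\lp \epsilon^2 \rp \rp^{\pm \frac{1}{4}} = 1 \pm \frac{\epsilon}{4\lp z - E_{\pm 3} \rp} + \mathcal{O}\lp \epsilon^2 \rp.
\end{align}
Substituting into \eqref{Le} yields $\mathcal{N}_{\pm 1} = \1 \mp \frac{\epsilon}{4\lp z-E_{\pm 3} \rp}\sdue + \mathcal{O}\lp \epsilon^2 \rp$, and hence $\mathcal{N}_{\pm 1}^{-1} = \1 \pm \frac{\epsilon}{4\lp z-E_{\pm 3} \rp}\sdue + \mathcal{O}\lp \epsilon^2 \rp$, all uniform on the contour.

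Next I would conjugate by the diagonal exponential. Writing $\sdue = i\lp \sigma_- - \sigma_+ \rp$ and using the elementary identity $e^{-iA\stre}\sigma_{\pm}e^{iA\stre} = e^{\mp 2iA}\sigma_{\pm}$, one obtains for $z\in\partial\mathbb{D}_1$
\begin{align}
\psi_1\lp z;\epsilon \rp^{-1} = \1 + \frac{\epsilon}{4\lp z-E_3 \rp}\ls -i\, e^{-2iA_+}\sigma_+ + i\, e^{2iA_+}\sigma_- \rs + \mathcal{O}\lp \epsilon^2 \rp,
\end{align}
with $A_+ := \tf + \frac{\alpha - \hatalpha(\epsilon)}{2}$, and the completely analogous formula on $\partial\mathbb{D}_{-1}$ with $A_- := \tf + \frac{\alpha - \overline{\hatalpha(\epsilon)}}{2}$ and the opposite sign in front of $\sdue$.

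The decisive step is the power counting in $\epsilon$ governed by $\hatalpha$. From \eqref{giorgia_uno} one has $\hatalpha = \frac{i\beta}{\pi}\ln\lp \frac{\epsilon}{4i\mathcal{H}} \rp + \mathcal{O}\lp \epsilon\ln\epsilon \rp$, so that $e^{\pm i\hatalpha} = \lp \frac{\epsilon}{4i\mathcal{H}} \rp^{\mp\beta/\pi}\lp 1 + o(1) \rp$, the $\mathcal{O}\lp \epsilon\ln\epsilon \rp$ error contributing only the harmless factor $1+o(1)$. In $\psi_1^{-1}$ the coefficient of $\sigma_+$ then behaves like $\epsilon\, e^{i\hatalpha}\sim \epsilon^{\,1-\beta/\pi}$, and that of $\sigma_-$ like $\epsilon\, e^{-i\hatalpha}\sim\epsilon^{\,1+\beta/\pi}$. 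For $\beta\in\lp -\pi;\pi \rp$ both exponents are strictly positive, whence $\psi_{\pm 1}^{-1}\to\1$ and $M_Q\to\1$ uniformly, which is the first assertion \eqref{Maldonado}. For $\beta = \pi$ the $\sigma_+$ exponent is exactly zero: $e^{i\hatalpha} = \frac{4i\mathcal{H}}{\epsilon}\lp 1+o(1) \rp$ cancels the explicit $\epsilon$, while the $\sigma_-$ coefficient carries $\epsilon^2\to 0$. Substituting and simplifying the constant, $\frac{(-i)(4i\mathcal{H})}{4} = \mathcal{H}$, gives $\psi_1^{-1}\to \1 + \frac{\mathcal{H}\,e^{-i\ls 2\tf + \alpha \rs}}{z-E_3}\sigma_+$; conjugating by $\psi_0$ produces exactly \eqref{Mo}. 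On $\partial\mathbb{D}_{-1}$ the same computation, now with $\overline{\hatalpha}$ in place of $\hatalpha$, lets the $\sigma_-$ term survive with constant $-\overline{\mathcal{H}}$, yielding \eqref{lina}.

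Finally, the uniformity and $L^p$ claims follow for free: all expansions above hold uniformly in $z$ on the compact contours $\partial\mathbb{D}_{\pm 1}$, which have finite length and stay bounded away from both $E_{\pm 1}$ and $E_{\pm 3}$, so the remainders are $\mathcal{O}\lp \epsilon^2 \rp$ and $\mathcal{O}\lp \epsilon\ln\epsilon \rp$ uniformly; uniform convergence on a rectifiable contour of finite length immediately gives convergence in every $L^p\lp \partial\mathbb{D}_{\pm 1} \rp$, $1\le p\le\infty$. I expect the main obstacle to be the bookkeeping of the surviving constant: one must match the $\mathcal{O}\lp \epsilon \rp$ coefficient from the expansion of $\mathcal{N}_{\pm 1}$ against the precise leading factor $4i\mathcal{H}$ of $e^{i\hatalpha}$ coming from \eqref{giorgia_uno}--\eqref{giorgia_due}, and keep track of the signs and the factors of $i$ produced by $\sdue = i\lp \sigma_- - \sigma_+ \rp$ and by the complex conjugation in the $\partial\mathbb{D}_{-1}$ case. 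Everything else is a uniform estimate of strictly positive powers of $\epsilon$.
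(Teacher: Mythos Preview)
Your proof is correct and follows essentially the same approach as the paper's: both expand $\mathcal{N}_{\pm 1}^{-1}$ to first order in $\epsilon$ on the fixed contour, conjugate by the diagonal exponential, and then use the asymptotic \eqref{giorgia_uno} for $\hat\alpha(\epsilon)$ to perform the power counting $\epsilon\cdot e^{\pm i\hat\alpha}\sim \epsilon^{\,1\mp\beta/\pi}$, singling out the surviving $\sigma_+$ (resp.\ $\sigma_-$) term when $\beta=\pi$. The only cosmetic difference is that you organize the computation via $\sigma_2=i(\sigma_--\sigma_+)$ and the identity $e^{-iA\sigma_3}\sigma_\pm e^{iA\sigma_3}=e^{\mp 2iA}\sigma_\pm$, whereas the paper writes the $2\times 2$ entries out directly; one small point worth tightening is that after conjugation the ``$\mathcal{O}(\epsilon^2)$'' off-diagonal remainder also gets multiplied by $e^{\pm i\hat\alpha}$, so for $\beta=\pi$ it is really $\mathcal{O}(\epsilon)$ (and the $(1+o(1))$ from \eqref{giorgia_uno} gives an $\mathcal{O}(\epsilon\ln\epsilon)$ correction), but this does not affect the limit.
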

\begin{proof}
Let $z$ belong to $\partial \mathbb{D}_1$. 
Inspection of \eqref{Le} using \eqref{na}, \eqref{movimento} shows that 
\begin{equation}
\mathcal N_1(z;\epsilon) = \left[
\begin{array}{cc}
1 + \mathcal O(\epsilon^2)  &  \frac{ i \epsilon }{4(z-E_3)} + \mathcal O(\epsilon^2)\\
\frac{ -i \epsilon }{4(z-E_3)} + \mathcal O(\epsilon^2) & 1+ \mathcal O(\epsilon^2)
\end{array}
\right]
\end{equation}
and hence 
\begin{align}
M_Q(z;\epsilon) &= 
\psi_0(z)  \left[
\begin{array}{cc}
1 + \mathcal O(\epsilon^2)  &  \frac{ -i \epsilon {\rm e}^{-2i\theta - i \alpha + i\hat \alpha(\epsilon)} }{4(z-E_3)} + \mathcal O(\epsilon^2)\\
\frac{ i \epsilon  {\rm e}^{2i\theta + i\alpha -i \hat \alpha(\epsilon)} }{4(z-E_3)} + \mathcal O(\epsilon^2) & 1+ \mathcal O(\epsilon^2)
\end{array}
\right]\psi_0(z)^{-1} 
\\&=
\mathbf 1  +  \psi_0(z)\left[
\begin{array}{cc}
\mathcal O(\epsilon^2)  & 
\le(\frac   \epsilon {4i{\mathcal H}}\ri)^{  -\frac \beta \pi}
 \frac{ -i \epsilon {\rm e}^{-2i\theta  - i \alpha} }{4(z-E_3)} (1 + \mathcal O(\epsilon\ln \epsilon))+ \mathcal O(\epsilon^2)\\
\le(\frac   \epsilon {4i \mathcal H }\ri)^{ \frac \beta \pi}\frac{ i \epsilon  {\rm e}^{2i\theta + i\alpha } }{4(z-E_3)}(1 + \mathcal O(\epsilon\ln \epsilon)) + \mathcal O(\epsilon^2) & \mathcal O(\epsilon^2)
\end{array}
\right]\psi_0(z)^{-1} 
\label{conju}
\end{align}
%
 where we have used  \eqref{giorgia_uno}, so that  
\begin{align}
\exp \lp \pm i \hat\alpha(\epsilon)\rp  = \le(\frac {\epsilon}{4i \mathcal H}\ri)^{\mp \frac \beta \pi} ( 1 + \mathcal O(\epsilon))
\end{align}
This formula shows at once that for $\beta \in (-\pi,\pi)$ the limit of $M_Q(z;\epsilon)$ is the identity matrix and uniformly so in a neighbourhood of $\partial \mathbb D_{\pm 1}$, while for $\beta=\pi$ the $(2,1)$ entry in the conjugation  \eqref{conju} has a limiting value.
The case when $z\in \partial \mathbb{D}_{-1}$ is completely analogous.
The statement about the convergence follows at once by observing that the uniform convergence on compact sets (such as the boundaries of $\mathbb D_{\pm 1}$) implies convergence in all the respective $L^p$ spaces.
\end{proof}

\subsection{Limit of the two-phase solutions}

We are now ready to compute the limit of the two-phase solutions of fNLS $q\lp x,t;\epsilon \rp$ as $\epsilon$ tends to $0^+$. Going backwards through definitions (\ref{81},\ref{Def_Phi_tilde},\ref{Def_Phi} and \ref{Formula_Beni}) one obtains 
\begin{align}\label{Gianfreda}
 q\lp x,t;\epsilon \rp = -2e^{-2 i  d_{\infty}\lp \epsilon \rp}\lim_{z\rightarrow \infty }\ls z\cdot Q\lp z;\epsilon \rp\cdot \psi_0\lp z \rp \rs_{12}.
\end{align}
Let us first assume that $\beta$ belongs to $\lp -\pi;\pi \rp$. The jumps of $Q\lp z;\epsilon \rp$ are defined on a compact contour and in this case they tend to the identity uniformly as $\epsilon$ tends to $0^{+}$. By standard arguments in the theory of Riemann-Hilbert problems \cite{DeiOrt}, then, one has 
\begin{align}
\le\|Q(z;\epsilon)-\1\ri\|  = \frac {o(1)}{1+|z|}, \quad \mbox{uniformly on closed subsets of } \C \setminus \mathbb D_{+1} \cup \mathbb D_{-1}\mbox{ as }\epsilon\rightarrow 0^+.
\end{align}
Formula (\ref{Gianfreda}) reduces then to 
\begin{align}
\lim_{\epsilon\rightarrow 0^{+}} q\lp x,t;\epsilon \rp = -2e^{-2 i  \lim_{\epsilon\rightarrow 0^+} d_{\infty}\lp \epsilon \rp}\lim_{z\rightarrow 	\infty}\ls z\psi_0\lp z \rp \rs_{12}
\end{align}
and theorem \ref{th_onda_piana} follows from definition (\ref{Def_psi_zero}) and proposition \ref{Jenny}.\vspace{1em}\\
The situation is clearly completely different for  $\beta=\pi$. The contribution of $Q\lp z;\epsilon \rp$ to the limit of (\ref{Gianfreda}) as $\epsilon$ tends to zero is in this case not negligible and therefore the matrix $\psi_0$ is {\bf not} the limit of $\bPsi(z;\epsilon)$.  
The strategy is that of solving the ``residual'' RHP for $Q(z;\epsilon)|_{\epsilon=0}$


\begin{RHP}
\label{Eresidual}
Let $\beta =\pi$; the matrix-valued function $Q\lp z;0 \rp$ is the only one satisfying the following properties: 
\begin{description}
\item[\textbf{i}]  $Q\lp z;0 \rp$ is analytic in $\mathbb{C}\backslash\lb \partial \mathbb{D}_1 \cup \partial \mathbb{D}_{-1} \rb.$

\item[\textbf{ii}]  For every $z$ belonging to $\partial \mathbb{D}_1 \cup \partial \mathbb{D}_{-1}$ one has
\begin{align}
Q_{+}\lp z;0 \rp = Q_{-}\lp z;0 \rp M_{Q}\lp z;0 \rp
\end{align}
Here 
\begin{align}
M_Q\lp z;0 \rp = \mathbf 1 + \frac{G\lp z \rp}{z-E_{\pm 3}}, \quad \quad z\in \partial \mathbb{D}_{\pm 1}
\end{align}
where
\begin{align}
G\lp z \rp =\lb 
\begin{array}{ll}
\mathcal{H}e^{-i\ls 2\theta\lp z; x,t \rp + \alpha \rs}\psi_0\lp z \rp\sigma_+\psi_0\lp z \rp^{-1} & \quad\quad z\in\partial \mathbb{D}_1\\
-\overline{\mathcal{H}}e^{i\ls 2\theta\lp z; x,t \rp + \alpha \rs}\psi_0\lp z \rp\sigma_-\psi_0\lp z \rp^{-1} & \quad\quad z\in \partial\mathbb{D}_{-1}
\end{array}
\right.
\end{align}


\item[\textbf{iii}] As $z$ tends to $\infty$ one has
\begin{align}
Q\lp z;0 \rp = \mathbf 1 + \mathcal{O}\lp \frac{1}{z} \rp
\end{align}
\end{description}
\end{RHP}
The solution is constructed explicitly; first of all we are only interested in the explicit expression of $Q_-(z;0)$ (i.e. in the region $\mathbb D_0$) because we need to extract the coefficient of the term $z^{-1}$ at infinity. The second observation is that $Q_-(z;0)$ extends to a meromorphic function with {\em simple poles} at $E_{\pm 3}$; this is seen because $Q_- = Q_+ {M_Q}^{-1}$ and the right side  has a simple pole in $\mathbb D_{\pm 1}$. Therefore we can write
\be
\label{117}
Q_-(z;0) = \mathbf 1 + \frac {A_1}{z-E_3}  + \frac {A_{-1}} {z-E_{-3}}
\ee
The conditions that determine the matrices $A_{\pm 1}$ are the fact that $Q_+ = Q_- M_Q$ must be analytic at $E_{\pm 3}$. 
For example consider the disk $\mathbb D_1$: expanding the product $Q_- M_Q$ in  Laurent series we have 
\begin{align}\label{134}
\lp \mathbf 1 + \frac{A_1}{z-E_3} + \frac{A_{-1}}{z-E_{-3}} \rp\cdot \lp \mathbf 1 + \frac{G\lp z \rp}{z-E_3} \rp =  &\,\,A_{1} \cdot G\lp E_3 \rp\cdot\frac{1}{\lp z-E_3 \rp^2}  + \\ 
& \ls \lp\mathbf 1 + \frac{A_{-1}}{E_3 - E_{-3}}\rp \cdot G\lp E_3 \rp +
 A_{1}\cdot\lp \mathbf 1 + G^{\prime} \lp E_3 \rp \rp  \rs\cdot \frac{1}{z-E_3} +\mathcal{O}\lp 1 \rp,
\end{align}
as $z$ approaches $E_3$. The analyticity condition for $Q_{-}M_Q$ in $E_3$ is equivalent to the following system of matrix equations:
\begin{align} \label{primo_sistema}
\lb\begin{array}{l}
A_1 \cdot G\lp E_3 \rp = 0\\
A_1 \cdot\ls \mathbf 1 + G^{\prime}\lp E_3 \rp \rs + A_{-1}\cdot\frac{G\lp E_3 \rp}{E_3-E_{-3}} + G\lp E_3 \rp = 0
\end{array}\right.
\end{align}
Analyticity condition in $E_{-3}$ yields instead

\begin{align}\label{secondo_sistema}
\lb\begin{array}{l}
A_{-1}\cdot G\lp E_{-3} \rp = 0\\
A_{1}\cdot\frac{G\lp E_{-3} \rp}{E_{-3} - E_3} + A_{-1}\cdot\ls \mathbf 1 + G^{\prime}\lp  E_{-3} \rp \rs + G\lp E_{-3} \rp = 0
\end{array}\right.
\end{align} 
Now, the first equations of (\ref{primo_sistema}) and (\ref{secondo_sistema}) yield respectively
\begin{align}
A_1 = \ls \begin{array}{cc} 0 & a \\ 0 & b \end{array}\rs\psi_0\lp E_3 \rp^{-1}, \quad\quad A_{-1} = \ls \begin{array}{cc} c & 0\\ d & 0 \end{array}\rs\psi_0\lp E_{-3} \rp^{-1}
\end{align}
The complex numbers $a,b,c$ and $d$ are determined using the second  equations of (\ref{primo_sistema}) and (\ref{secondo_sistema}). After appropriate algebraic manipulations one obtains
\begin{align}
c = \overline{b}, \quad\quad d = -\overline{a}, \quad\quad 
a = \frac{V\overline{T} - \overline{W}U}{\la T \ra^2 + \la U \ra^2}, \quad\quad 
b = \frac{\overline{V}U + W\overline{T}}{\la T \ra^2 + \la U \ra^2}.
\end{align}
where
\begin{align}
& T = 1 + \small{\frac{\lp \la E_3-E_1 \ra - \la E_3 - E_{-1} \ra \rp^2}{4\mbox{Im}\lp E_1 \rp\mbox{Im}\lp E_3 \rp}}\,e^{i\ls 2d_0\lp E_3 \rp - 2\theta\lp E_3;x,t \rp -\alpha \rs},\\
& U =   \small{\frac{\mathcal{H}\,\lp\la E_3 - E_1 \ra + \la E_3 - E_{-1} \ra\rp}{4i\,\Imag\lp E_3 \rp \sqrt{\la E_3-E_1 \ra} \sqrt{\la E_3 -E_{-1} \ra}}}\,
e^{i\ls d\lp E_3 \rp - d\lp E_{-3} \rp -2\theta\lp E_3;x,t \rp -\alpha \rs},\\
& V =   -\small{\frac{\mathcal{H}}{2} \lp \sqrt[4]{\frac{E_3-E_1}{E_3-E_{-1}}} + \sqrt[4]{\frac{E_3 - E_{-1}}{E_3 - E_1}} \rp}\, e^{i\ls d_0\lp E_3 \rp - d_{0,\infty} - 2\theta\lp z;x,t \rp - \alpha \rs},\\
& W =   \small{\frac{i\mathcal{H}}{2} \lp \sqrt[4]{\frac{E_3-E_1}{E_3-E_{-1}}} - \sqrt[4]{\frac{E_3-E_{-1}}{E_3 - E_1}}\rp}\, e^{i\ls d_0\lp E_3 \rp + d_{0,\infty} -2\theta\lp z;x,t \rp -\alpha \rs}.\\
\end{align}

\begin{corollary}
\label{cor28}
Let $\beta$ equal $\pi$ and define the matrix 
\begin{align}\label{Eneide}
\mathcal E(z;\epsilon):= Q(z;\epsilon) Q^{-1}(z;0).
\end{align}
Then
\be
\le\|
\mathcal E(z;\epsilon) -\1
\ri\|  = \frac {o(\epsilon \ln \epsilon )}{1+|z|}
\ee
uniformly over closed sets of $\mathbb C\setminus  \lb\mathbb D_{1} \cup \mathbb D_{-1}\rb$.
\end{corollary}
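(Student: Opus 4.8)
The plan is to realize $\mathcal E\lp z;\epsilon \rp$ as the solution of a \emph{small-norm} Riemann--Hilbert problem posed on the fixed compact contour $\partial\mathbb D_1\cup\partial\mathbb D_{-1}$, and then to invoke the standard perturbative estimates of \cite{DeiOrt}. First I would record that $Q\lp z;\epsilon \rp$ and $Q\lp z;0 \rp$ are both analytic off $\partial\mathbb D_{\pm1}$, tend to $\1$ at infinity, and have unit determinant (their jumps differ from $\1$ by a rank-one nilpotent, cf. \eqref{Mo}--\eqref{lina}); hence $\mathcal E=Q\lp z;\epsilon \rp Q\lp z;0 \rp^{-1}$ is analytic on $\C\setminus\lb\partial\mathbb D_1\cup\partial\mathbb D_{-1}\rb$ with $\mathcal E=\1+\mathcal O\lp z^{-1} \rp$ at infinity. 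Combining the jump relations $Q_+\lp\cdot;\epsilon\rp=Q_-\lp\cdot;\epsilon\rp M_Q\lp\cdot;\epsilon\rp$ and $Q_+\lp\cdot;0\rp=Q_-\lp\cdot;0\rp M_Q\lp\cdot;0\rp$ gives, on $\partial\mathbb D_{\pm1}$,
\begin{align}
\mathcal E_+\lp z;\epsilon \rp = \mathcal E_-\lp z;\epsilon \rp\, M_{\mathcal E}\lp z;\epsilon \rp,\qquad M_{\mathcal E}\lp z;\epsilon \rp = Q_-\lp z;0 \rp\, M_Q\lp z;\epsilon \rp M_Q\lp z;0 \rp^{-1}\, Q_-\lp z;0 \rp^{-1},
\end{align}
so the whole problem reduces to estimating $M_{\mathcal E}-\1$ on this fixed contour.

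Next I would bound $M_{\mathcal E}-\1$. Since the meromorphic matrix $Q_-\lp z;0 \rp=\1+\frac{A_1}{z-E_3}+\frac{A_{-1}}{z-E_{-3}}$ of \eqref{117} has its only poles $E_{\pm3}$ strictly inside $\mathbb D_{\pm1}$, both $Q_-\lp z;0 \rp$ and $Q_-\lp z;0 \rp^{-1}$ are bounded on $\partial\mathbb D_{\pm1}$; it therefore suffices to control $M_Q\lp z;\epsilon \rp M_Q\lp z;0 \rp^{-1}-\1$. Writing $M_Q\lp z;\epsilon \rp=\1+N\lp z;\epsilon \rp$ and $M_Q\lp z;0 \rp=\1+N_0\lp z \rp$, I would reuse the refined expansion \eqref{conju} from the proof of Proposition \ref{Ali} with $\beta=\pi$: the surviving off-diagonal entry of $N$ equals that of the rank-one nilpotent $N_0$ up to the multiplicative factor $\exp\lp\pm i\hatalpha(\epsilon)\rp\lp\tfrac{\epsilon}{4i\mathcal H}\rp^{\pm1}=1+\mathcal O\lp\epsilon\ln\epsilon\rp$ furnished by \eqref{giorgia_uno}, plus a genuine $\mathcal O\lp\epsilon\rp$ remainder produced when the $\mathcal O\lp\epsilon^2\rp$ corrections of $\mathcal N_{\pm1}$ are multiplied by the large factor $\tfrac{4i\mathcal H}{\epsilon}$. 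Using $N_0^2=0$ (because $\sigma_\pm^2=0$ and conjugation by $\psi_0$ preserves nilpotency) to annihilate the cross term, one obtains $M_Q M_Q\lp\cdot;0\rp^{-1}=\1+\lp N-N_0\rp$, whose deviation from $\1$ is exactly the remainder just described, uniformly for $z\in\partial\mathbb D_{\pm1}$. Conjugation by the bounded $Q_-\lp z;0 \rp$ preserves the order, so $\le\|M_{\mathcal E}\lp z;\epsilon \rp-\1\ri\|$ is controlled, uniformly on the contour and in every $L^p\lp\partial\mathbb D_{\pm1}\rp$, by the remainder of \eqref{giorgia_uno}.

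Finally I would feed this into the small-norm theory: for $\epsilon$ small the Cauchy singular-integral operator on the compact contour with jump $M_{\mathcal E}-\1$ is invertible by Neumann series, and the representation
\begin{align}
\mathcal E\lp z;\epsilon \rp=\1+\frac{1}{2\pi i}\int_{\partial\mathbb D_1\cup\partial\mathbb D_{-1}}\frac{\mathcal E_-\lp s;\epsilon \rp\ls M_{\mathcal E}\lp s;\epsilon \rp-\1\rs}{s-z}\,\mbox{d}s
\end{align}
yields $\le\|\mathcal E\lp z;\epsilon \rp-\1\ri\|=\frac{o\lp\epsilon\ln\epsilon\rp}{1+|z|}$ uniformly on closed subsets of $\C\setminus\lb\mathbb D_1\cup\mathbb D_{-1}\rb$, the factor $(1+|z|)^{-1}$ coming from the fixed compact contour being bounded away from such $z$. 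The delicate point I anticipate is precisely the passage from the crude bound $\mathcal O\lp\epsilon\ln\epsilon\rp$ to the sharp order $o\lp\epsilon\ln\epsilon\rp$: while the $\mathcal O\lp\epsilon\rp$ part coming from $\mathcal N_{\pm1}$ is automatically $o\lp\epsilon\ln\epsilon\rp$, the factor $1+\mathcal O\lp\epsilon\ln\epsilon\rp$ multiplying $N_0$ forces one to verify that the remainder in \eqref{giorgia_uno} for $\hatalpha$ carries no surviving $\epsilon\ln\epsilon$ contribution — i.e.\ to sharpen that remainder to $o\lp\epsilon\ln\epsilon\rp$ — so that the leading logarithmic part, already captured exactly by the $\epsilon=0$ jump encoded in $M_Q\lp z;0 \rp$, is the only $\epsilon\ln\epsilon$-scale effect and cancels in the difference $N-N_0$.
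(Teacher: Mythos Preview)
Your approach is essentially identical to the paper's: set up $\mathcal E$ as a small-norm RHP on $\partial\mathbb D_{\pm1}$, estimate $M_Q(z;\epsilon)M_Q(z;0)^{-1}-\1$ via the expansion in Proposition~\ref{Ali}, conjugate by the bounded $Q_-(z;0)$, and invoke the small-norm theorem. Your closing concern about $o$ versus $\mathcal O$ is well-taken --- the paper's own proof in fact only records the bound $\mathcal O(\epsilon\ln\epsilon)$ for the jump, which is all that is needed downstream since that quantity already tends to zero.
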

{\bf Proof.}
We have proved that (for $\beta=\pi$) 
we have 
\be
\le\|M_Q(z;\epsilon) M^{-1}_Q(z;0) - \1\ri\|  = \mathcal O(\epsilon \ln \epsilon)\ ,\qquad 
z\in \partial \mathbb D_{\pm 1} 
\ee
where the convergence is in $\sup$ norm and hence also (and at the same rate) in any $L^p$. 
The matrix $\mathcal E\lp z;\epsilon \rp$ is normalized to the identity at infinity, and its jumps, only on $\partial\mathbb D_{\pm 1}$, are of the form
\be
\mathcal E_+(z;\epsilon) = \mathcal E_-(z;\epsilon) \ls Q_{-} (z;0)M_Q(z;\epsilon) M^{-1}_Q(z;0) Q^{-1}_{-}(z;0)\rs = 
\mathcal E_-(z;\epsilon)  \le(\1 + { \mathcal O(\epsilon \ln \epsilon) }\ri)
\ee
This is sufficient to apply the standard small norm theorem for Riemann-Hilbert problems.
$\Box$

\vspace{1em}
Substituting after definition (\ref{Eneide}) in (\ref{Gianfreda}), and taking the limit on both sides for $\epsilon$ that goes to zero yields
\begin{align}\label{Leone}
\lim_{\epsilon\to 0^+}q\lp x,t;\epsilon \rp = -2e^{-2i\lim_{\epsilon\to 0^+}d_{\infty}\lp \epsilon \rp}\cdot \lim_{\epsilon\to 0^+}\lim_{z\to \infty}\ls z\cdot\mathcal{E}\lp z;\epsilon \rp\cdot Q\lp z;0 \rp\cdot \psi_0\lp z \rp \rs_{12}
\end{align}
The uniform convergence guaranteed by corollary \ref{cor28} allows one to exchange the two limits in the r.h.s. above: 
\begin{align}
\lim_{\epsilon\to 0^+}\lim_{z\to \infty}\ls z\cdot \mathcal{E}\lp z;\epsilon \rp\cdot Q\lp z;0 \rp\cdot\psi_0\lp z \rp \rs_{12} &= \lim_{z\to\infty}\lim_{\epsilon\to 0^+}\ls z\cdot \mathcal{E}\lp z;\epsilon \rp\cdot Q\lp z;0 \rp\cdot\psi_0\lp z \rp \rs_{12}\\
&= \lim_{z\to \infty}\ls z\cdot Q\lp z;0 \rp\cdot \psi_0\lp z \rp \rs_{12}.
\end{align}
In view of this last one and using \eqref{117}, then  (\ref{Leone}) simplifies to
\begin{align}
\lim_{\epsilon\to 0^+}q\lp x,t;\epsilon \rp = -2e^{-2i\lim_{\epsilon\to 0^+}d_\infty\lp \epsilon \rp}\lb \ls A_1 + A_{-1}\rs_{12} + \lim_{z\to\infty}\ls z\psi_0\lp z \rp \rs_{12} \rb.
\end{align}
The proof of theorem \ref{th_Soliton} is then completed by means of straightforward though long algebraic manipulations.
\vspace{2em}

\end{document}